\documentclass[12pt, letterpaper]{article}

\usepackage{arxiv}

% Use the postscript times font!
%\usepackage{times}
%\usepackage{soul}
%\usepackage{url}
%\usepackage[hidelinks]{hyperref}
\usepackage[utf8]{inputenc}
\usepackage[T1]{fontenc}   % use 8-bit T1 fonts
\usepackage{amsfonts}       % blackboard math symbols

\usepackage[small]{caption}
\usepackage{graphicx}
\usepackage{amsmath}
\usepackage{amsthm}
\usepackage{amssymb}
\usepackage{booktabs}
\usepackage[algo2e,ruled,vlined]{algorithm2e}%for algorithm
\usepackage{algorithm}
\usepackage{algorithmic}
\usepackage{color}

\usepackage{natbib}
\usepackage{authblk}

\newtheorem{lemma}{Lemma}

\usepackage[pdftex,
pdfauthor={Yiheng Shen; Yuan Deng; Pingzhong Tang},
pdftitle={New Complexity Results on Coalitional Manipulation of Borda}]{hyperref}

\title{New Complexity Results on Coalitional Manipulation of Borda}
%some titles for choice
%mechanism design with type-dependent (uncertain) valuation--applied to the collaborative learning
%competition in the collaborative learning
%incentive issues in the federated learning when competition exists

\author[1]{Yiheng Shen\thanks{shen-yh17@mails.tsinghua.edu.cn}\enspace}
\author[2]{Pingzhong Tang\thanks{kenshin@tsinghua.edu.cn}\enspace}
\author[3]{Yuan Deng\thanks{ericdy@cs.duke.edu}\enspace}
\affil[1]{Tsinghua University}
\affil[2]{Tsinghua University}
\affil[3]{Duke University}

\newtheorem{theorem}{Theorem}
\newtheorem{definition}{Definition}

%% the rest of your preamble here

\begin{document}
 \maketitle
\begin{abstract}
 The Borda voting rule is a positional scoring rule for $z$ candidates such that in each vote, the first candidate receives $z-1$ points, the second $z-2$ points and so on. The winner in the Borda rule is the candidate with highest total score. 
We study the manipulation problem of the Borda rule in a setting with two non-manipulators while one of the non-manipulator's vote is weighted. We demonstrate a sharp contrast on computational complexity depending on the weight of the non-manipulator: the problem is NP-hard when the weight is larger than $1$ while there exists an efficient algorithm to find a manipulation when the weight is at most $1$.
\end{abstract}

\section{Introduction}

Voting is a general mechanism to aggregate preferences in multi-agent systems to select a socially desirable candidate. However, Gibbrard-Sattertwhaite theorem states that all (nondictatorial) voting protocols are
{\em manipulable} \cite{gibbard1973manipulation,satterthwaite1975strategy}. Therefore, a key problem confronted by the voting mechanisms is {\em manipulation} by the voters, i.e., a voter submits a vote different from her true preference such that the outcome becomes more favorable for her. Since voting mechanisms are tailored to generate a socially desirable outcome, insincere preference report may result in an undesirable candidate to be chosen.

Although the existence of manipulation is guaranteed, the computational hardness could be the barrier for a manipulation. This paper focuses on a specific scoring rule: Borda rule. In a voting under Borda with $z$ candidates, each agent gives a list of candidates according to their preference, the candidate appearing in the $k$-th place receives a score of $z-k$. The candidate with the highest total score wins the election. We refer readers to \cite{dasgupta2008robustness} for a comprehensive survey and introduction on Borda voting and social choice in general.

Under coalitional condition, \cite{Conitzer2007When} proved that the Borda manipulation is NP-hard when votes are weighted. Reducing from the RN3DM problem~\cite{Yu2004Minimizing}, \cite{Davies2011Complexity} proved that unweighted Borda manipulation is NP-hard for a group of $2$ insincere voters (manipulators). Furthermore, \cite{Betzler2011Unweighted} proved that the problem remains NP-hard when there are more than $2$ honest voters (non-manipulators). It still remains open whether the manipulation is hard when there are only $2$ non-manipulators.

In this paper, we consider a setting with $2$ manipulators and $2$ non-manipulators. However, manipulation is in fact easy to find in this setting if all the votes are unweighted. Henceforth, we turn our attention to an environment in which one of the non-manipulator's vote is weighted by a parameter $w \in \mathbb Q_{>0}$, while the votes from other votes are unweighted, i.e., weighted by $1$. Weighted voting rules have been applied to many real world applications, such as collaborative filtering in recommender systems~\cite{pennock2000social}, and recently have been playing an important role for policy design in blockchains. For example, in the DPOS (delegated proof of stake)~\cite{larimer2014delegated}, weighted voting rule is applied for agents to elect the {\em witnesses} who are responsible for validating transactions and creating blocks, and each agent's vote is weighted by how much stakes (i.e., coins/tokens) she owns.

Our main contribution is to demonstrate a complexity boundary at $w = 1$: for $w \leq 1$, we design an efficient algorithm for the manipulators to find a manipulation, while for fixed $w > 1$, we show that it is NP-hard to find a manipulation. When $w > 1$ and $w \in \mathbb Z$, we can cast our results to the unweighted setting by considering $w+1$ non-manipulators whose votes are unweighted and among them, $w$ non-manipulators share the same vote. Therefore, we are able to recover the hardness results in unweighted settings from~\cite{Betzler2011Unweighted} for more than $2$ non-manipulators.
Based on our results, we conjecture that in a weighted Borda setting with multiple manipulators and non-manipulators, it is computationally hard for the group of manipulators to find a manipulation if and only if the total weights on the votes from manipulators are strictly less than the total weights on non-manipulators' votes. Our result is served to be the first step to understand the conjecture.

\paragraph{Related Work}

%Gibbard-Satterthwaite theorem~\citep{gibbard1973manipulation,satterthwaite1975strategy} states that any non-dictatorial voting rule with more than 3 possible outcomes must be a tactical voting for some input. As a result, almost all interesting voting rules can be manipulated. However, though the manipulation exists, it might be hard to find the manipulation. 

When there is one manipulator, \cite{bartholdi1989computational} proved that most prominent voting rules could be efficiently manipulated by Greedy manipulation. After that, there is a growing body of research on voting manipulation in the last two decades~\cite{hemaspaandra2007dichotomy,conitzer2002complexity,Conitzer2003How,russell2007complexity,xia2008generalized,xia2009complexity,xia2010scheduling,davies2010empirical,lu2012bayesian}.

%In recent years in the literature of computational social choice \cite{brandt2016handbook}, manipulation problems have been solved under almost all well-known social choice rules except Borda. Under coalitional condition, \citet{Conitzer2007When} proved that the Borda manipulation is NP-hard when votes are weighted. Reducing from the $RN3DM$ problem~\cite{Yu2004Minimizing}, \citet{Davies2011Complexity} proved that unweighted Borda manipulation is NP-hard when there are 2 manipulators. Furthermore, \citet{Betzler2011Unweighted} proved that the problem remains NP-hard when there are more than $2$ non-manipulators. 

Motivated by the results that the coalitional manipulation of the Borda rule~\cite{Conitzer2007When,Davies2011Complexity} is hard in many cases, the algorithms to calculate the manipulation for Borda rule have been proposed along the history. \cite{zuckerman2009algorithms} design an efficient algorithm for coalitional weighted manipulation, which gives a successful manipulation when given an extra manipulator with maximal weight and returns false when the manipulation is impossible. \cite{davies2010empirical} provide empirical evidence of the manipulability of Borda elections in the form of two new greedy manipulation algorithms based on intuitions from the bin-packing and multiprocessor scheduling domains, which indicate that though Borda may be hard to manipulate computationally, it provides very little defense against the coalitional manipulation. Instead of searching for an exact manipulation, practical approximation algorithms have been proposed \cite{brelsford2008approximability,Davies2011Complexity,keller2017new}.
Recently, \cite{yang2015manipulation} prove that the unweighted Borda manipulation problem with two manipulators is fixed-parameter tractable with respect to single-peaked width and \cite{yang2016exact} design the exact algorithms for both weighted and unweighted Borda manipulation. 
% \subsection{Our Contribution}
% \cite{Betzler2011Unweighted} have studied the case when the number of manipulators is larger than 2. It remains open whether the manipulation problem is hard when the number of non-manipulators is $2$. However, there is almost no study on the relation between the manipulators and non-manipulators in Borda manipulation when the number of voters are limited. We tried to fill in this blank and we believe that if the total weight of the manipulators are smaller than the total weight of the non-manipulators, the manipulation complexity should be NP-hard and this is our motivation. The situation with 2 manipulators and 2 non-manipulators may be a good starting point.

% In this paper, we first design an efficient algorithm for the unweighted situation with 2 manipulators and 2 non-manipulators. However, in weighted cases when one of the non-manipulators is weighted $w$ with $w \in \mathbb{Z}$ and $w\ge 2$, we show that the manipulating problem remains NP-hard. Finally, we generalize our result to the case where $w \in \mathbb{Q}^+$ and prove that $w=1$ is a boundary of the hardness: when $w \leq 1$, an efficient algorithm exists; otherwise, the manipulation problem is NP-hard. Furthermore, our work could generate the hardness result of \citeauthor{Betzler2011Unweighted}'s work. 

\section{Preliminaries}

In an election, there is a set of candidates $C$ and a set of voters $V$. For each voter, his vote is a total order of all the candidates in $C$. A voting protocol is a function mapping all the votes to a candidate $c \in C$, who is the winner of the election. Suppose there are $|C| = z$ candidates in total and Borda is a voting protocol in which for each vote, a candidate receives $z-1$ points if it ranks first, $z-2$ points if it ranks second, $\cdots$, $0$ if it ranks last. After adding up all the scores (number of points) in the votes, the candidate with the highest total score wins the election. If there is a tie, the protocol randomly chooses one candidate from all the candidates with the highest scores, with equal probability. 
    
In a manipulation of the voting rule, some of the voters, called the manipulators, learn the other voters' votes before they submit their own votes so that they can try to find a way to manipulate their votes to alter the outcome. We assume that the objective of the manipulators is to elect a specific candidate $c^* \in C$. A manipulation is successful if the manipulators' votes result in $c^*$'s winning probability larger than 0, i.e., no other candidate has a higher total score than $c^*$. 

In this paper, we focus on a setting with two manipulators and two non-manipulators. For convenience, we denote $C=\{c_1,c_2,$ $\cdots,$ $c_{z-1},c^*\}$ and $V=\{N_1, N_2, M_1, M_2\}$, where $N_1, N_2$ are the non-manipulators and $M_1, M_2$ are the manipulators. Moreover, we assume the vote from the non-manipulators $N_1$ is weighted by $w$ while the vote from all other voters are weighted by $1$. In other words, each point $N_1$ gives in the Borda rule is multiplied by $w$ such that the first-rank candidate receives $w(z-1)$, the second-rank candidate receives $w(z-2)$, and so on so forth. Throughout the paper, when we present the vote from $N_1$, the scores have already taken the multiplication factor $w$ into account. We denote the manipulation problem under this setting as 2-2BM.
    
To represent the votes, for the non-manipulator $N_j (1\le j \le 2)$, let $v_{ji}$ be the score that she gives to candidate $c_i (1 \le i \le z-1)$ and $v_{j*}$ be the score to candidate $c^*$. Similarly, for manipulators $M_j (1\le j\le 2)$, let $t_{ji}$ be the score that she gives to candidate $c_i (1 \le i \le z-1)$ and $t_{j*}$ be the score to candidate $c^*$. Let $V_{j}^{\mathcal N} (1\le j \le 2)$ be the set of scores given to all candidates by $N_j$ and $V_{j}^{\mathcal M} (1\le j\le 2)$ be the set of scores given to all candidates by $M_j$. Therefore, we have $V_{1}^{\mathcal N}=\{v_{11}, v_{12}, \cdots, v_{1(z-1)} , v_{1*}\}=\{0,w,2w,\cdots, (z-1)w\}$ and $V_{2}^{\mathcal N}=V_{1}^{\mathcal M}=V_{2}^{\mathcal M}=\{0,1,2,\cdots, z-1\}$.

\section{Efficient Manipulation for $w \leq 1$}

We start with the cases when $w \leq 1$, in which we design an efficient algorithm to find a manipulation.

\begin{definition}[Respective REVERSE algorithm]
Manipulator $M_j$'s vote is constructed by 
reversing $N_j$'s vote and then promoting candidate $c^*$ to the first place.
\end{definition}
\begin{theorem}\label{thm1}
Respective REVERSE algorithm can always make $c^*$ to be one of the winners when $w \leq 1$.
\end{theorem}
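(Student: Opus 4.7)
The plan is to fix an arbitrary candidate $c \neq c^*$ and show that its total score does not exceed that of $c^*$; since ties are broken uniformly at random, this makes $c^*$ a winner. I will parameterize the ballots by the scores they assign. Write $v_{1c} = a w$ and $v_{1*} = a^* w$ with $a, a^* \in \{0,1,\ldots,z-1\}$ and $a \neq a^*$, and $v_{2c} = b$, $v_{2*} = b^*$ with $b, b^* \in \{0,1,\ldots,z-1\}$ and $b \neq b^*$. The algorithm gives $t_{1*} = t_{2*} = z-1$. For $c$, reversing $N_j$'s vote places $c$ at position $a+1$ (resp.\ $b+1$); the subsequent promotion of $c^*$ shifts $c$ down by one position exactly when $c$ was above $c^*$ in the reversed vote, i.e., when $a < a^*$ (resp.\ $b < b^*$). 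Thus $M_1(c) = z-1-a$ if $a > a^*$ and $M_1(c) = z-2-a$ if $a < a^*$, and analogously for $M_2(c)$.

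Next I would group the scores by non-manipulator--manipulator pairs. For the pair $(N_1, M_1)$, a direct calculation gives
\begin{equation*}
\bigl[v_{1*} + t_{1*}\bigr] - \bigl[v_{1c} + M_1(c)\bigr]
= \begin{cases} a^* w + a(1-w), & a > a^*, \\ (a^*-a)w + a + 1, & a < a^*, \end{cases}
\end{equation*}
and for $(N_2, M_2)$ the same formula with $w=1$:
\begin{equation*}
\bigl[v_{2*} + t_{2*}\bigr] - \bigl[v_{2c} + M_2(c)\bigr]
= \begin{cases} b^*, & b > b^*, \\ b^* - b + 1, & b < b^*. \end{cases}
\end{equation*}
Summing the two pair-differences yields the total $S(c^*) - S(c)$, which I would check is non-negative in each of the four cases determined by the signs of $a - a^*$ and $b - b^*$.

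The only case where non-negativity is not immediate is the one where $c$ sits above $c^*$ in the weighted vote, i.e., $a > a^*$. There the $N_1$--$M_1$ contribution to $c$ is $a w + (z-1-a)$, which can exceed $a^* w + (z-1)$ when $w$ is large; concretely, the difference reduces to $a^* w + a(1-w)$, and this is where the hypothesis $w \leq 1$ is used crucially, since it makes both $a^* w$ and $a(1-w)$ non-negative. In every other case the difference is strictly positive because either an additive $+1$ term appears (coming from the push-down effect of promoting $c^*$ past $c$) or $c^*$ already scores higher than $c$ from the corresponding non-manipulator. Combining the two pair-inequalities gives $S(c^*) \geq S(c)$, which I expect to be the main content of the argument.

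The only remaining delicate point is equality: $S(c^*) = S(c)$ can occur, e.g., when $w = 1$ and $a^* = b^* = 0$, but this is harmless because the tie-breaking rule assigns $c^*$ positive winning probability whenever it is tied for first, which is exactly the success criterion stated in the preliminaries. The expected obstacle is bookkeeping the four cases cleanly and isolating the single place where $w \leq 1$ enters; once the pair-decomposition above is set up, each case becomes a one-line inequality.
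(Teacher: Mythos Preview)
Your argument is correct, up to a harmless arithmetic slip: in the $(N_2,M_2)$ pair with $b<b^*$, substituting $w=1$ into your own formula $(a^*-a)w+a+1$ yields $b^*+1$, not $b^*-b+1$. Either quantity is positive, so the conclusion is unaffected.

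The paper organizes the computation differently and avoids your four-way case split. Instead of comparing $c^*$ to each $c$ directly, it introduces a single threshold $2z-2$. After the reversal step alone (before $c^*$ is promoted), every candidate $c$ has total score $(1+w)z+(1-w)\,r_1(c)-2\le 2z-2$, where $r_1(c)$ is $c$'s rank in $N_1$'s vote; this inequality uses $w\le 1$ in exactly the same place you do. After the promotion, $c^*$'s score is $(3+w)z-w\,r_1(c^*)-r_2(c^*)-2\ge 2z-2$. Since promoting $c^*$ can only lower the other candidates' scores, the threshold sandwiches $c^*$ above everyone else with no case analysis. Your per-candidate comparison trades this global bound for explicit bookkeeping of the push-down effect; it is a bit longer, but it does make transparent exactly which candidates lose a point to the promotion and why the $a>a^*$ case is the only delicate one.
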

\begin{proof}
Let $r_j(c)$ be candidate $c$'s rank in $N_j$'s vote. After the reversion process of the algorithm, for any candidate $c$, her total score is $(1+w)\cdot z + (1-w)\cdot r_1(c) - 2 \leq 2z - 2$,
where the maximum is taken when $r_1(c) = z$. Moreover, after the promotion of candidate $c^*$, the total score of $c^*$ is $(3+w)\cdot z - w \cdot r_1(c^*) - r_2(c^*) - 2 \geq 2 z - 2$
where the minimum is taken when $r_1(c^*) = r_2(c^*) = z$. Moreover, notice that the promotion of $c^*$ does not increase the total score of any candidate other than $c^*$. Therefore, the final score of $c^*$ is at least of the total score of any other candidate, and thus, $c^*$ is one of the winners after the manipulation.
\end{proof}
% \begin{theorem}\label{thm2}
% If $c^*$ ranked last both in $N_1$ and $N_2$'s votes, Respective REVERSE algorithm can make $c^*$ win with probability $1/z$ and no other manipulation can make $c^*$'s winning probability surpass $1/z$.
% \end{theorem}
% \begin{proof}
%  We again argue by considering an equivalent way mentioned above to apply the algorithm. Since $c^*$ ranks last both in $N_1$ and $N_2$'s votes, after reversing $N_1$ and $N_2$'s votes, $c^*$ is ranked first in both $M_1$ and $M_2$'s votes. Therefore, the second promotion step does not cause any change to the ranks. As a result, all the candidates gets the same score and $c^*$'s winning probability is $1/z$. 

% Notice that the sum of all the players' scores is $2z(z-1)$, and thus, the average score of all the candidates must be $2(z-1)$, thus the only situation in which $c^*$ could possibly win is that all the candidates get $2(z-1)$ points. Therefore no other algorithms can do better than the Respective REVERSE algorithm.
% The proof for the theorem is similar to the previous one, thus we omit the proof here.
% \end{proof}

\section{Hardness Results with $w \geq 3$}
%\yuan{Combine the proofs of intergral and rational cases}

We now turn to the situation where one non-manipulator's vote is weighted by $w \geq 3$. Our reduction is based on the 2-numerical matching with target sums (2NMTS) problem.
\begin{definition}
	In a 2-numerical matching with target sums (2NM-TS) problem, given $m$ integers $2\le a_1\le a_2\le \cdots\le a_m\le 2m$ with $\sum_{i=1}^m a_i=m(m+1)$, the objective is to determine whether there exist two 1-to-$m$ permutations $p_{1}$ and $p_{2}$ such that $\forall 1\le j \le m$, $p_{1}(j)+p_{2}(j)=a_j$.
\end{definition}
\begin{lemma}
2NMTS is NP-hard.
\end{lemma}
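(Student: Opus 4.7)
The plan is to reduce from the Restricted Numerical 3-Dimensional Matching (RN3DM) problem, whose NP-hardness is established in~\cite{Yu2004Minimizing}. Recall that RN3DM asks, given $m$ nonnegative integers $d_1, \ldots, d_m$, whether there exist two permutations $\pi_1, \pi_2$ of $\{1, \ldots, m\}$ such that $\pi_1(j) + \pi_2(j) = d_j$ for every $j$. Thus RN3DM and 2NMTS share essentially the same combinatorial skeleton, and the task reduces to bridging the additional side conditions that the 2NMTS definition explicitly imposes.

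First I would observe that the numerical constraints $2 \le a_i \le 2m$ and $\sum_{i=1}^m a_i = m(m+1)$ in the 2NMTS definition are in fact necessary for any RN3DM instance that admits a solution: since $\pi_1(j), \pi_2(j) \in \{1, \ldots, m\}$, each pointwise sum lies in $[2, 2m]$, and summing over $j$ yields $2 \sum_{k=1}^m k = m(m+1)$. Hence, given an arbitrary RN3DM instance, in polynomial time we either reject it outright (when these numerical constraints fail, in which case the instance is trivially a no-instance) or forward it as a valid input to 2NMTS.

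Next, the monotone ordering $a_1 \le a_2 \le \cdots \le a_m$ demanded by 2NMTS is without loss of generality: if $\sigma$ is the permutation that sorts the targets, then $(\pi_1, \pi_2)$ solves the original instance if and only if $(\pi_1 \circ \sigma^{-1}, \pi_2 \circ \sigma^{-1})$ solves the sorted instance, since the only constraints are the pointwise sum equations, which are symmetric in the index $j$. Sorting costs $O(m \log m)$ time, so the overall reduction from RN3DM to 2NMTS is polynomial and preserves yes/no instances.

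The main obstacle, should one insist on a fully self-contained proof, would be to re-derive the NP-hardness of RN3DM itself, which Yu obtains via a nontrivial reduction from 3-Partition using a gadget that encodes bin-capacity constraints as pointwise permutation sums. I would not reproduce that construction here; instead, I would invoke~\cite{Yu2004Minimizing} directly and confine the proof to the sorting-and-filtering step above, which shows that the syntactic differences between RN3DM and 2NMTS cost nothing in complexity.
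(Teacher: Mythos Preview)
Your proposal is correct and follows essentially the same route as the paper: both reduce from RN3DM via the substitution $a_j = e - u_j$ (so that $\sigma(j)+\pi(j)=a_j$) and invoke \cite{Yu2004Minimizing} for the hardness of RN3DM. You are in fact more careful than the paper, which silently omits the sorting step and the range check $2\le a_i\le 2m$; your filtering-and-sorting argument fills exactly those small gaps.
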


\begin{proof}
We show that 2NMTS is a variant of RN3DM (restricted numerical 3-dimensional matching): given a multi-set $U = \{u_1$, $ \cdots,$ $ u_m\}$ of integers and an integer $e$ such that $\sum_{i=1}^m u_i+ m\cdot(m+1)=m\cdot e$, decide whether there exist two 1-to-$m$ permutations $\sigma$ and $\pi$ such that $\sigma(i)+\pi(i)+u_i=e$ for all $i$, which is proved to be NP-hard \cite{Yu2004Minimizing}. Since variable $e$ is fixed, let $a_j = e-u_j$ and therefore, the RN3DM problem is equivalent to the 2NMTS problem. Thus, 2NMTS is NP-hard. 
\end{proof}

\begin{theorem}\label{thm6}
2-2BM is NP-hard when one of the non-manipulators is weighted by $w \ge 3$ and $w\in \mathbb{Q}$.
\end{theorem}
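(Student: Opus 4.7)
The plan is a polynomial-time reduction from 2NMTS, which the preceding lemma established to be NP-hard. Given a 2NMTS instance with integers $a_1 \le a_2 \le \cdots \le a_m$ satisfying $\sum_j a_j = m(m+1)$, I will construct a 2-2BM instance on $z$ candidates: $m$ ``main'' candidates $c_1, \ldots, c_m$ in bijection with the $a_j$'s, a small constant number of ``dummy'' candidates, and the target $c^*$. The non-manipulators' votes will be designed so that electing $c^*$ is equivalent to finding two permutations $p_1, p_2$ of $\{1, \ldots, m\}$ with $p_1(j) + p_2(j) = a_j$ for every $j$.

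The core of the construction is to pick $N_1$'s vote (a permutation of $\{0, w, 2w, \ldots, (z-1)w\}$) and $N_2$'s vote (a permutation of $\{0, 1, \ldots, z-1\}$) so that two properties hold simultaneously. First, each main candidate $c_j$'s total non-manipulator score equals $T - a_j$, where $T$ is the best total score $c^*$ could achieve; this reduces the pairwise constraint ``$c_j$ does not beat $c^*$'' to the single inequality $t_{1j} + t_{2j} \le a_j$. Second, the non-manipulator scores on $c^*$ and on the dummies pin the manipulators' ranks on those candidates in every winning profile: $M_1$ and $M_2$ must award $c^*$ the top rank $z-1$ and give the dummies specific forced ranks. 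After these forced assignments, the remaining manipulator budget available for $c_1, \ldots, c_m$ totals exactly $\sum_j a_j = m(m+1)$, so summing the per-$j$ inequalities forces equality $t_{1j} + t_{2j} = a_j$ throughout. The two manipulators' induced rankings of $c_1, \ldots, c_m$ then form two permutations of $\{1, \ldots, m\}$ whose pairwise sums are the $a_j$'s, a 2NMTS solution; the converse direction (constructing the manipulators' votes from a given 2NMTS solution) is immediate by reversing the construction.

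I expect the main obstacle to be the combinatorial feasibility of the first design goal above: simultaneously picking partial permutations $\{u_{1j}\}$ and $\{u_{2j}\}$ of $\{0, 1, \ldots, z-1\}$ such that $w \cdot u_{1j} + u_{2j} = T - a_j$ for every $j$. For each target $T - a_j$ this is the choice of a lattice point on the affine line $w x + y = T - a_j$, and the chosen $x$- and $y$-coordinates must each be distinct across $j$. This is precisely where $w \ge 3$ is used: the coarse step $w$ of the weighted voter, relative to $N_2$'s unit step, spreads the feasible lines far enough apart that such a two-dimensional matching can be realized uniformly in the input, with $z$ polynomial in $m$. I plan to carry this out through an explicit assignment of $(u_{1j}, u_{2j})$ rather than an abstract Hall-type argument. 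Rational $w = p/q \ge 3$ will follow the same blueprint, choosing the reference value $T$ and padding the candidate set so the targets remain expressible, so that the integer-weight analysis transfers with only bookkeeping changes.
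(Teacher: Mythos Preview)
Your high-level plan---reduce from 2NMTS by having $m$ ``main'' candidates whose non-manipulator totals force $t_{1j}+t_{2j}=a_j$ via a tight budget argument---matches the paper's strategy. But two concrete steps in your plan will not go through as stated.

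First, the exact equality $w\cdot u_{1j}+u_{2j}=T-a_j$ is not realisable for non-integer rational $w$. Writing $w=p/q$ in lowest terms, $w\,u_{1j}+u_{2j}$ is an integer only when $q\mid u_{1j}$, so the admissible $u_{1j}$'s lie in a single residue class modulo $q$; you cannot pick $m$ distinct such ranks inside a candidate set whose size you are trying to keep small. The paper avoids this entirely by setting $v_{2i}=\lfloor d_i-v_{1i}\rfloor$ with $d_i=v_{1*}+v_{2*}+2z-a_{m+1-i}$, which yields the pair of inequalities
\[
v_{1i}+v_{2i}+a_{m+1-i}-2\le F^*\quad\text{and}\quad v_{1i}+v_{2i}+a_{m+1-i}-1>F^*,
\]
and these (together with the global budget) are already enough to force $t_{1i}+t_{2i}=a_{m+1-i}-2$. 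Your remark that the rational case follows ``with only bookkeeping changes'' understates this: exact equality must be replaced by a floor-based sandwich, and that changes the argument.

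Second, your diagnosis of where $w\ge 3$ enters is off, and this is tied to your plan of using only $O(1)$ dummies. In the paper the dummy block has size $\lceil(w+1)m\rceil$, not a constant; the manipulators voluntarily park their high scores $m,\ldots,z-1$ there, and the entire content of the ``if'' direction is showing those dummies still do not overtake $c^*$. That verification reduces to checking the two endpoints $F_{m+1}$ and $F_{\lceil(w+2)m\rceil}$ of an arithmetic progression, and the endpoint bound $3w+6\le w^2+2w+2$ is exactly what needs $w\ge 3$. With only $O(1)$ dummies you cannot absorb the manipulators' high ranks this way; you would instead need to \emph{force} the dummy ranks, but the per-candidate constraints only pin down sums $t_{1d}+t_{2d}$, not the individual scores, so the leftover scores for the main candidates need not be $\{0,\ldots,m-1\}$ (or $\{1,\ldots,m\}$). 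Either enlarge the dummy block to $\Theta(wm)$ as the paper does, or supply a separate argument that sum-forcing on $O(1)$ dummies pins down the manipulators' score sets on the main candidates---your current sketch does neither.
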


% The intuition of the proof is to first construct the scores given by the two non-manipulators to the first $m$ candidates' votes to make the vote for the manipulators ``tight" for the $m$ candidates and then we minimize the maximal score of the remaining $z-m-1$ candidates in order to prevent them from beating $c^*$. When we construct the instance, 
Let the total number of candidates be $z=\lceil(w+2)m\rceil+1$ and we construct the non-manipulators' votes as follows: $v_{1*}=w\lceil(w-1)m\rceil+w$ and $v_{2*}=\lceil(w-2)m\rceil+a_1-2$. For any $1\le i\le m$, we set $v_{1i}=w(z-i)$ and $v_{2i}=\lfloor d_i-v_{1i}\rfloor$, where $d_i=v_{1*}+v_{2*}+2z-a_{m+1-i}$.

For the rest of the candidates, we arrange the remaining scores from $N_2$ in an increasing order $g_1$, $g_2$, $\cdots$, $g_{\lceil(w+1)m\rceil}$. Similarly, we arrange the remaining scores from $N_1$ in decreasing order $u_1$, $u_2$, $\cdots$, $u_{\lceil(w+1)m\rceil}$. For  $m+1\le i\le \lceil(w+2)m\rceil$, we set $v_{1i}=u_{i-m}$ and $v_{2i}=g_{i-m}$. We finish the proof of Theorem~\ref{thm6} by showing that the answer to the 2-2BM instance is ``Yes" if and only if the answer to the corresponding 2NMTS instance is ``Yes".

\begin{lemma} \label{lem131}
The answer to the 2-2BM instance is ``Yes" only if the answer to the corresponding 2NMTS instance is ``Yes".
\end{lemma}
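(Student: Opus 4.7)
The plan is to derive from a successful manipulation a pair of permutations $p_1,p_2$ of $\{1,\dots,m\}$ with $p_1(j)+p_2(j)=a_j$, by showing that the manipulators are forced to rank $c^*$ first on both ballots and to assign their $m$ smallest remaining scores $\{0,1,\dots,m-1\}$ to the heavy candidates $c_1,\dots,c_m$ in a tightly constrained way.

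First, I would extract a clean per-candidate bound for each $1\le i\le m$. Starting from $\mathrm{score}(c_i)\le\mathrm{score}(c^*)$ and plugging in $v_{1i}=w(z-i)$, $v_{2i}=\lfloor d_i-v_{1i}\rfloor$ together with $d_i=v_{1*}+v_{2*}+2z-a_{m+1-i}$, the estimate $v_{1i}+v_{2i}>d_i-1$ yields a strict real-valued bound on $t_{1i}+t_{2i}$. Because $t_{1i}+t_{2i}$, $t_{1*}+t_{2*}$, $a_{m+1-i}$ and $2z$ are all integers, this strict inequality collapses to the clean integer bound
\[
t_{1i}+t_{2i}\le t_{1*}+t_{2*}+a_{m+1-i}-2z.
\]

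Next I would run a sandwich argument. Summing the above bound over $i$ and using $\sum_j a_j=m(m+1)$ gives $\sum_i(t_{1i}+t_{2i})\le m(t_{1*}+t_{2*})+m(m+1)-2mz$. On the other hand, since each manipulator's vote is a permutation of $\{0,\dots,z-1\}$, the $m$ scores $\{t_{ki}\}_{i=1}^m$ are distinct nonnegative integers and sum to at least $m(m-1)/2$ for each $k\in\{1,2\}$, so $\sum_i(t_{1i}+t_{2i})\ge m(m-1)$. Together with $t_{1*},t_{2*}\le z-1$, these bounds close up only when $t_{1*}=t_{2*}=z-1$ and $\sum_i(t_{1i}+t_{2i})=m(m-1)$. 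Tightness of the lower bound then forces $\{t_{1i}\}_{i=1}^m=\{t_{2i}\}_{i=1}^m=\{0,1,\dots,m-1\}$, and tightness of the per-$i$ bound gives $t_{1i}+t_{2i}=a_{m+1-i}-2$. Defining $p_k(j):=t_{k,\,m+1-j}+1$ for $k\in\{1,2\}$ yields two permutations of $\{1,\dots,m\}$ satisfying $p_1(j)+p_2(j)=a_j$, which is the desired 2NMTS witness.

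The main obstacle I expect is the first step. Because $w\in\mathbb{Q}$, the quantities $v_{1*}$ and $v_{1i}$ need not be integers, and the floor in $v_{2i}$ leaves a fractional slack $d_i-v_{1i}-v_{2i}\in[0,1)$. One must carefully track this slack and appeal to integrality of the manipulators' scores to round the resulting strict real inequality down to the sharp integer bound, without losing the tightness on which the subsequent double-counting depends. Once the per-candidate bound is in hand, the remainder is a tight counting argument in which every inequality must be saturated and the 2NMTS solution falls out; the candidates $c_{m+1},\dots,c_{z-1}$ play no role in this direction of the reduction.
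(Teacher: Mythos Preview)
Your proposal is correct and follows essentially the same approach as the paper: derive a per-candidate upper bound $t_{1i}+t_{2i}\le a_{m+1-i}-2$ from the floor-function construction, then run a sandwich argument against the lower bound $\sum_i(t_{1i}+t_{2i})\ge m(m-1)$ coming from distinctness of Borda scores. The paper's proof is terser---it uses the bound $t_{1*}+t_{2*}\le 2(z-1)$ directly rather than deriving $t_{1*}=t_{2*}=z-1$, and it leaves the summing step and the conclusion that $\{t_{ki}\}=\{0,\dots,m-1\}$ implicit---but the logic is the same, and your more careful treatment of the fractional slack from $\lfloor d_i-v_{1i}\rfloor$ and the integrality rounding is exactly what is needed to make that terse argument rigorous.
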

\begin{proof}
By our construction of $v_{1i}$ and $v_{2i}$ for $1 \leq i \leq m$, we have $v_{1i}+v_{2i}+a_{m+1-i}-2\le v_{1*}+v_{2*}+2z-2$ and $v_{1i}+v_{2i}+a_{m+1-i}-1> v_{1*}+v_{2*}+2z-2$. Moreover, since candidate $c^*$ gets $v_{1*}+v_{2*}$ points from $N_1$ and $N_2$ and the two manipulators can give him at most $2(z-1)$ points, in order to ensure $c^*$ wins with probability larger than 0, we need to have for all $1 \leq i \leq m$, $t_{1i} + t_{2i} = a_{m+1-i} - 2$,
which implies that the answer to the instance of the 2NMTS is ``Yes".
\end{proof}
\begin{lemma}\label {lem132}
The answer to the 2-2BM instance is ``Yes" if the answer to the corresponding 2NMTS instance is ``Yes".
\end{lemma}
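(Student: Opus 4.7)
The plan is to explicitly construct manipulator ballots $M_1, M_2$ from a solution $(p_1, p_2)$ of the 2NMTS instance and then check that $c^*$ ties or beats every other candidate. For the critical candidates $c_1, \cdots, c_m$, I would set $t_{1i} = p_1(m{+}1{-}i) - 1$ and $t_{2i} = p_2(m{+}1{-}i) - 1$. This is a valid partial Borda assignment because $p_1, p_2$ are permutations of $\{1, \cdots, m\}$, so each manipulator uses exactly the scores $\{0, 1, \cdots, m{-}1\}$ on those candidates. The 2NMTS identity gives $t_{1i} + t_{2i} = a_{m+1-i} - 2$, and combining this with $v_{1i} + v_{2i} = v_{1*} + v_{2*} + 2z - a_{m+1-i}$ (the identity already used in the proof of Lemma~\ref{lem131}) shows that each critical candidate has total score exactly $v_{1*} + v_{2*} + 2z - 2$.

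Next, I would place $c^*$ in the top slot on both ballots, $t_{1*} = t_{2*} = z - 1$, so that $c^*$'s total becomes $v_{1*} + v_{2*} + 2(z-1) = v_{1*} + v_{2*} + 2z - 2$, exactly tying the critical candidates. This fixes the target bound against which each remaining (``tail'') candidate $c_{m+1}, \cdots, c_{z-1}$ must be measured.

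The tail candidates must be filled in using the leftover Borda scores $\{m, m{+}1, \cdots, z{-}2\}$ on each manipulator's ballot, of which there are exactly $\lceil(w{+}1)m\rceil = z - 1 - m$, matching the number of tail candidates. I would pair them monotonically: assign $M_1$'s leftover scores in increasing order along the decreasing sequence $u_1 > u_2 > \cdots$, and $M_2$'s leftover scores in decreasing order along the increasing sequence $g_1 < g_2 < \cdots$, so that each tail candidate receives a manipulator contribution whose magnitude anti-correlates with its non-manipulator contribution on the corresponding side.

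The main obstacle, and the reason $v_{1*}$ and $v_{2*}$ are chosen so precisely in the reduction, is verifying the tail inequality $u_k + g_k + \tau_{1k} + \tau_{2k} \le v_{1*} + v_{2*} + 2(z-1)$ for every $k$ in the above range. I expect to establish it by writing $u_k, g_k, \tau_{1k}, \tau_{2k}$ as piecewise-linear functions of $k$ and splitting into cases according to where $k$ lies relative to the ``gaps'' in the $u$- and $g$-sequences produced by deleting $v_{1*}$ from $V_1^{\mathcal N}$ and $v_{2*}$ from $V_2^{\mathcal N}$. The bound is tight at the extreme indices, so the argument will hinge on the exact coefficients $\lceil(w{-}1)m\rceil$ and $\lceil(w{-}2)m\rceil$ appearing in $v_{1*}$ and $v_{2*}$, and on the hypothesis $w \ge 3$ (which keeps $v_{2*} \ge 0$ and prevents $v_{1*}$ from colliding with the used block $\{w(z{-}m), \cdots, w(z{-}1)\}$); the ceiling operations are there precisely to absorb rounding when $w \in \mathbb{Q} \setminus \mathbb{Z}$. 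If the plain monotone pairing turns out to miss the bound at a boundary index, I would locally swap two adjacent manipulator scores to rebalance, using the slack provided by the $a_1 - 2$ term in $v_{2*}$.
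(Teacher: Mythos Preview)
Your treatment of the critical candidates $c_1,\dots,c_m$ and of $c^*$ is fine (strictly speaking the floor in $v_{2i}$ means $v_{1i}+v_{2i}+a_{m+1-i}-2\le F^*$ rather than equality, but the inequality is all that is needed).

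The gap is in the tail assignment. Under your rule, candidate $c_{m+k}$ receives $m+k-1$ from $M_1$ and $z-1-k$ from $M_2$, so the manipulator contribution is the \emph{constant} $m+z-2$ for every tail candidate. But the non-manipulator contribution $u_k+g_k$ is far from constant: $u_k$ drops by $w$ per unit increase in $k$ while $g_k$ rises by at most $1$, so $u_k+g_k$ is maximised at $k=1$. There $u_1=w(z-m-1)=w\lceil(w+1)m\rceil$ and $g_1=0$ (none of $v_{2*},v_{21},\dots,v_{2m}$ equals $0$), giving total
\[
w\lceil(w+1)m\rceil+m+z-2\;\approx\;(w^2+2w+3)m,
\]
whereas $F^*\approx(w^2+2w+2)m$. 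The overshoot is $\Theta(m)$, so ``locally swapping two adjacent manipulator scores'' cannot repair it; the pairing is wrong in kind, not by a boundary rounding.

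The paper avoids this by anti-correlating \emph{both} manipulators with the heavy voter $N_1$: it sets $t_{1i}=t_{2i}=i-1$ for $m+1\le i\le z-1$, so the manipulator sum is $2(i-1)$, smallest exactly where $u$ is largest. It then replaces $v_{1i},v_{2i}$ by the arithmetic upper bounds $v_{1i}'=w(z-i)$ and $v_{2i}'=i-1$, so the tail totals themselves form an arithmetic progression and it suffices to check the two endpoints $i=m+1$ and $i=\lceil(w+2)m\rceil$; both are bounded by $F^*$ using $w\ge 3$ and $a_1\ge 2$. If you adopt this symmetric assignment for $M_1,M_2$ on the tail, your piecewise-linear case split collapses to a two-line endpoint check.
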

\newcommand{\mtwo}{\lceil(w+2)\cdot m\rceil}
\newcommand{\mone}{\lceil(w+1)\cdot m\rceil}
\begin{proof}
If the answer to the 2NMTS problem is ``Yes", the two manipulators have a way to give the scores from $\{0, \cdots, m-1\}$ to the first $m$ candidates such that the total score of any of them is at most 
\begin{align*}
    F^* &~= 2(z-1)+v_{1*}+v_{2*} \\
    &~= w\lceil (w-1)\cdot m\rceil+\lceil(w-2) m\rceil+w+a_1+2z-4 \\
    &~\geq (w^2 + 2w + 2)\cdot m + w + a_1 - 2.
\end{align*}
For the remaining scores from manipulators, we simply allocate them such that for $m+1\le i\le \lceil(w+2) m\rceil$, $t_{1i}=t_{2i}=i-1$. As for the non-manipulator $N_1$, for any $m+1\le i\le (w+2)\cdot m$,  we set $v_{1i}'=w\cdot (z-i)$. 

Recall our construction for $v_{1i}$ and $v_{2i}$ $m+1\le i\le \lceil(w+2) m\rceil$, we have $v_{1i} \leq w(z-i) \equiv v'_{1i}$ and $v_{2i} \leq i-1 \equiv v'_{2i}$.
We argue that for any $m+1\le i\le \lceil(w+2) m\rceil$, her total score is at most $F^*$ even if her scores are $v'_{1i}$ and $v'_{2i}$.

% Since the largest $m$ scores of $N_1$ are used to give candidates $\{c_1, \cdots, c_m\}$, the scores in $\{v_{1i}'\}$ are the largest possible scores left. Similarly, since the score $\mtwo$ has been used by $N_2$ to give $c_m$, the scores in $v_{2i}'$ are the largest scores possibly left. Thus, for any $i(m+1\le i\le \mtwo)$ the real scores in votes $v_{1i}$ and $v_{2i}$ cannot be larger than the scores in $v_{1i}'$ and $v_{2i}'$ respectively. 

Since all elements in $\{v_{1i}'\}$, $\{v_{2i}'\}$, $\{t_{1i}\}$ and $\{t_{2i}\}$ forms arithmetic sequences, the total scores $\{F_i\}$ should also be an arithmetic sequence for $m+1\le i \le \lceil(w+2) m\rceil$. Therefore, it is sufficient to check the total scores of the  $c_{m+1}$ and $c_{\lceil(w+2) m\rceil}$. For $c_{m+1}$, her total score $F_{m+1}$ is
\begin{align*}
F_{m+1}&= w \lceil (w+1)m\rceil+3m \leq (w^2+w+3)\cdot m+w\\
&\le (w^2+2w+2)\cdot m+w+a_1-2
\end{align*}
where the first inequalty is due to $w \lceil (w+1)m\rceil \leq w(w+1)m + w$ and the second inequality is due to $w \geq 3$ and $a_1 \geq 2$. As for $c_{\lceil(w+2) m\rceil}$, her total score $F_{\lceil(w+2) m\rceil}$ is
\begin{align*}
F_{\lceil(w+2) m\rceil}&=w+3\lceil (w+2) m\rceil-3< (3w+6)m +w \\
%&<(w^2+2w+2)\cdot m +w\\
&\le (w^2+2w+2) m+w+a_1-2
\end{align*}
where the first inequality is due to $3\lceil (w+2) m\rceil \leq 3(w+2)m+3$ and the second one is due to $w \geq 3$ and $a_1 \geq 2$.
\end{proof}
Therefore, we finish the reduction from 2NMTS to 2-2BM and show that 2-2BM is  NP-hard when one of the non-manipulators is weighted by $w \geq 3$.

\section{Case with $1 < w < 3$}

We now turn to the region with $1 < w < 3$. Recall that our construction for $w \geq 3$ relies on the condition that inequality $3w + 6 \leq w^2 + 2w + 2$ holds to upper bound $F_{\lceil (w+2) m \rceil}$. Therefore, it requires us to make a new construction to demonstrate that the manipulation is still computational hard for $1 < w < 3$.

\begin{table*}[htbp]
\centering
\begin{tabular}{|c|c|c|c|c|c|c|}
\hline
Group ID& 1 & 2 & 3 \\
\hline
$N_1 (2 \times)$ &$6m\to 5m + 1$&$5m\to 4m+ 1$ & $4m\to 3m + 1$\\
\hline 
$N_2$&$5m\to 4m+1$&$4m\to 3m+1$ & $3m\to 2m+1$\\
\hline
$M_1$&$m\to 2m-1$&$2m\to 3m-1$ & $4m\to 5m-1$\\
\hline
$M_2$&$m\to 2m-1$&$3m\to 4m-1$ & $4m\to 5m-1$\\
\hline
Group ID& 4 & 5 & 6 \\
\hline
$N_1 (2 \times)$ &$3m\to 2m+1$&$2m-1\to m$ & $m-1\to 0$\\
\hline 
$N_2$&$2m\to m+1$&$5m+1\to 6m$ & $6m+1\to 7m$\\
\hline
$M_1$&$5m\to 6m-1$&$7m-1\to 6m$ & $4m-1\to 3m$\\
\hline
$M_2$&$6m\to 7m-1$& $2m\to 3m-1$ & $5m\to 6m-1$\\
\hline
\end{tabular}
\caption{Construction of Votings on other 6m Candidates.}
\label{tab:votes}
\end{table*}

\subsection{Integral Weights}

We start with the easier case when the weight $w$ is integral, i.e., $w = 2$, and we will prove the hardness for the case when $w$ is a real number in Section~\ref{sec:real}.

\begin{theorem} \label{thm5}
2-2BM is NP-hard when one of the non-manipulators is weighted by $w = 2$.
\end{theorem}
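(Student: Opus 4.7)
The plan is to reduce from 2NMTS by extending the construction in Theorem~\ref{thm6}. Since the inequality $3w+6 \le w^2+2w+2$ that controlled the worst-case filler candidate in that proof fails at $w=2$, I would enlarge the candidate set by $6m$ padding candidates so that $z = 7m+1$ in total: $c^*$, the $m$ gadget candidates $c_1,\ldots,c_m$ encoding the 2NMTS instance, and six groups of $m$ filler candidates arranged as in Table~\ref{tab:votes}.

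On the gadget, I would mimic the choices of $v_{1*}, v_{2*}, v_{1i}, v_{2i}$ from Theorem~\ref{thm6} (specialized to $w=2$ and the new value of $z$), so that the target $F^* = v_{1*}+v_{2*}+2(z-1)$ can be matched by a gadget candidate $c_i$ only when $t_{1i}+t_{2i} = a_{m+1-i}-2$. This yields the only-if direction exactly as in Lemma~\ref{lem131}: a successful manipulation forces two disjoint permutations of $\{0,\ldots,m-1\}$ whose pairwise sums encode a 2NMTS solution after the shift by $-2$. For the if direction, given a 2NMTS solution I would assign the manipulators' lowest $m$ scores to the gadget candidates via the permutations, and distribute the remaining scores across the filler candidates following Table~\ref{tab:votes}.

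The main obstacle is bounding the filler candidates' totals by $F^*$. Within each of the six groups, the four score streams from $N_1$, $N_2$, $M_1$, $M_2$ are length-$m$ arithmetic progressions (with common difference $\pm 1$), so the per-candidate totals also form an arithmetic progression; it therefore suffices to verify the two endpoint candidates of each group, giving twelve routine linear-in-$m$ inequalities. The subtle design feature is that $N_1$ is weighted by $w=2$, so large $N_1$ entries must be offset by small $N_2, M_1, M_2$ entries (Groups 1--4) and conversely (Groups 5--6), which is exactly what the table achieves. I would also verify the bookkeeping that the scores assigned to $c^*$, the gadget, and the six groups together partition the required multisets of scores for $N_1, N_2, M_1, M_2$ (namely $\{0,2,\ldots,2(z-1)\}$ for $N_1$ and $\{0,1,\ldots,z-1\}$ for the others). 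Once these checks pass, every filler candidate's total is at most $F^*$, so $c^*$ ties for the win, completing the reduction.
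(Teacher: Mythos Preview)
Your high-level plan matches the paper's: set $z=7m+1$, encode the 2NMTS instance on $c_1,\ldots,c_m$, and pad with the six filler groups of Table~\ref{tab:votes}. The gap is in the sentence ``mimic the choices of $v_{1*},v_{2*},v_{1i},v_{2i}$ from Theorem~\ref{thm6} specialized to $w=2$ and the new value of $z$.'' The formulas $v_{1*}=w\lceil(w-1)m\rceil+w$ and $v_{2*}=\lceil(w-2)m\rceil+a_1-2$ do not involve $z$ at all, so plugging in $w=2$ gives $v_{1*}=2m+2$ and $v_{2*}=a_1-2$, hence $F^*=v_{1*}+v_{2*}+2(z-1)=16m+a_1$. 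But the leftmost candidate in Group~1 of Table~\ref{tab:votes} receives $2\cdot 6m+5m+m+m=19m$, and $19m>16m+a_1$ since $a_1\le 2m$. So even given a 2NMTS solution, a filler candidate strictly beats $c^*$ and the if-direction collapses. There is also a bookkeeping collision: your $v_{1*}=2m+2$ corresponds to unweighted $N_1$-rank $m+1$, which Table~\ref{tab:votes} already assigns to a Group~5 candidate, while the unweighted rank $2m$ that the table leaves vacant is never used.

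The paper repairs both issues by \emph{not} recycling Theorem~\ref{thm6}'s formulas for $c^*$: it sets $v_{1*}=4m$ (unweighted rank $2m$, precisely the slot the table leaves open in $N_1$'s column) and $v_{2*}=m+a_1$, which lifts the target to $F^*=19m+a_1\ge 19m+2$. With that target the endpoint checks you describe do succeed (the paper records group maxima $19m,19m,19m,19m,18m{-}1,17m{-}1$). A second detail you omit: with these values the gadget scores $v_{2i}$ fall inside $\{m+1,\dots,7m\}$, the very range Table~\ref{tab:votes} uses for $N_2$, so the table's $N_2$ column cannot be taken literally. The paper treats those entries as \emph{virtual} upper bounds and then remaps each filler candidate to the correspondingly-ranked actually-remaining $N_2$ score, which is never larger; your bookkeeping verification would need this remap step as well.
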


Given an instance of 2NMTS with $a_1 \leq a_2 \leq \cdots \leq a_m$, we construct an instance of 2-2BM as follows. We set the number of all the candidates be $z=7m+1$ and the votes of non-manipulators are constructed as follows:
\begin{itemize}
	\item $v_{1*}=4m$ and $v_{2*}=m+a_1$;
	\item For any $1\le i\le m$, $v_{1i}=2(z-i)$ and $v_{2i}=z+2i-a_{m+1-i}-2m+a_1-1$.
\end{itemize}
Note that $N_2$'s vote does not contain two duplicate score since $m + a_1\le 3m < z+2i-a_{m+1-i}-2m+a_1-1\le 7m+2i-2m\le 7m$ and $z+2i-a_{m+1-i}-2m+a_1-1$ is monotonically increasing as $i$ increases. Moreover, let $P_i = 3z-a_{m+1-i}-2m+a_1-1 = v_{1i}+v_{2i}$ and we note that the current total score of candidate $c^*$ is $5m+a_1$.

Now we have allocated $m+1$ different scores for $N_1$ and $N_2$. To construct the remaining votes, we divide the remaining $6m$ candidates into 6 groups. Group $k$ contains the candidates $c_{km+1}$, $c_{km+2}$, $\cdots$, $c_{(k+1)m}$ for $1 \leq k\leq 6$. Since the remaining scores of $N_2$ are not consecutive, we consider the situation when the remaining scores of $N_2$ are the highest $6m$ scores, i.e., the scores are between $m+1$ and $7m$. We call these scores {\em virtual scores}. Now we give the method to allocate the remaining scores from the virtual scores. The detailed construction is presented in Table~\ref{tab:votes}. 

For a group containing candidates from $c_a, \cdots, c_b$, $\ell \to r$ with $\ell < r$ in the table indicates that this voter gives candidate $c_{a + j}$ $(\ell + j)$ points for all $0 \leq j \leq b - a$. If $\ell > r$, then this voter gives candidate $c_{a + j}$ $(\ell - j)$ points for all $0 \leq j \leq b - a$.
Note that the numbers shown in Table~\ref{tab:votes} are the unweighted scores, and therefore, ``$(2\times)$" means that the scores given by $N_1$ will be doubled since her vote is weighted by $w = 2$.
The only undetermined part is how $M_1$ and $M_2$ allocates their scores from $\{0, \cdots, m-1\}$ to the candidates in $\{c_1, \cdots, c_m, c^*\}$. 
We prove Theorem~\ref{thm5} by showing that the answer to the  constructed instance of 2-2BM is ``Yes'' if and only if the answer to the corresponding instance of 2NMTS is ``Yes''.

% The virtual scores facilitates our analysis 

% the remaining scores for $N_2$ are larger than the real remaining score for $N_2$. Suppose in the virtual vote, $N_2$ gives $c_i(m+1\le i \le 7m)$ the $e_i^{th}$ highest score. In the original vote, we also let $v_{2i}$ give the $c_i$ $e_i^{th}$ highest remaining score and the total score for all the candidates would not get larger, therefore we have allocated the original scores for the $6m$ remaining candidates. Since the construction is completed.
 
 \begin{lemma}\label{lem9}
 The answer to the constructed instance of 2-2BM is ``Yes'' only if the answer to the corresponding instance of 2NMTS is ``Yes''
 \end{lemma}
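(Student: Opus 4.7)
The plan is to parallel the argument of Lemma~\ref{lem131}: I would use only the winning conditions of $c^*$ against each of $c_1, \ldots, c_m$ to pin down the manipulators' scores completely, and then read off the 2NMTS permutations. The winning conditions against the candidates in groups 1 through 6 will play no role in this direction.

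First, I would record that $c^*$ receives $v_{1*} + v_{2*} = 5m + a_1$ from the non-manipulators, while $c_i$ for $1 \le i \le m$ receives $P_i = v_{1i} + v_{2i} = 19m + 2 + a_1 - a_{m+1-i}$. The condition that $c^*$'s final score is at least $c_i$'s final score gives
\begin{equation*}
t_{1*} + t_{2*} - (t_{1i} + t_{2i}) \ge 14m + 2 - a_{m+1-i}, \qquad 1 \le i \le m.
\end{equation*}
Summing over $i$ and using $\sum_{j=1}^m a_j = m(m+1)$ yields
\begin{equation*}
m(t_{1*} + t_{2*}) - \sum_{i=1}^m (t_{1i} + t_{2i}) \ge 13m^2 + m.
\end{equation*}

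Next, I would bound this left-hand side from above using the permutation structure of each manipulator's vote. Each $t_{j*}$ is at most $z-1 = 7m$, and the values $t_{j1}, \ldots, t_{jm}$ are $m$ distinct elements of $\{0, 1, \ldots, 7m\}$, so their sum is at least $m(m-1)/2$. These bounds give an upper bound of $14m^2 - m(m-1) = 13m^2 + m$, matching the lower bound exactly. Equality therefore forces $t_{1*} = t_{2*} = 7m$ together with $\{t_{j1}, \ldots, t_{jm}\} = \{0, 1, \ldots, m-1\}$ for both $j \in \{1,2\}$.

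Substituting $t_{1*} + t_{2*} = 14m$ back into each of the $m$ original inequalities collapses them to $t_{1i} + t_{2i} \le a_{m+1-i} - 2$, and since the summed version is already tight, each of these pointwise inequalities must be tight as well: $t_{1i} + t_{2i} = a_{m+1-i} - 2$ for every $i$. Defining $p_k(j) := t_{k,\, m+1-j} + 1$ for $k \in \{1,2\}$ and $1 \le j \le m$ then gives two bijections from $\{1, \ldots, m\}$ to itself satisfying $p_1(j) + p_2(j) = a_j$, which is precisely a 2NMTS solution. The main obstacle will be the combined bookkeeping that rules out an allocation giving some score in $\{0, \ldots, m-1\}$ to a candidate outside $\{c_1, \ldots, c_m\}$ (which would leave $c_i$ with too large a score); the single summation above is what makes the two extremal conditions on $t_{j*}$ and on $\sum_i t_{ji}$ fire simultaneously, so the argument stands or falls on executing that step cleanly.
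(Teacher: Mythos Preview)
Your proof is correct and follows essentially the same averaging argument as the paper: both sum the $m$ winning constraints for $c^*$ against $c_1,\dots,c_m$, observe that the manipulators can contribute at most $2(z-1)=14m$ to $c^*$ and at least $m(m-1)$ in total to $c_1,\dots,c_m$, and conclude from the resulting tightness that each $t_{1i}+t_{2i}=a_{m+1-i}-2$ with $\{t_{j1},\dots,t_{jm}\}=\{0,\dots,m-1\}$. The paper phrases this as ``the average of the $F_i$ is at least the maximum possible $F^*$,'' while you phrase it as matching upper and lower bounds on $m(t_{1*}+t_{2*})-\sum_i(t_{1i}+t_{2i})$; the content is the same.
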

\begin{proof}
First of all, we have
\begin{align*}
\sum_{i=1}^m P_i&=\sum_{i=1}^m (3z-a_{m+1-i}-2m+a_1-1)\\
&=3zm-m(m+1)-2m^2+a_1m-m\\
&=m\cdot(3z+a_1-2-3m).
\end{align*}
Moreover, let $F_i$ be the sum of the final score of candidate $c_i(1\le i \le m)$. Since the two manipulators at least give these $m$ candidates $m(m-1)$ points, we obtain 
\begin{align*}
\sum_{i=1}^m F_i &\ge m\cdot(3z+a_1-2-3m)+m(m-1)\\
&=m\cdot(3z+a_1-3-2m).
\end{align*}

In addition, notice that two manipulators can give $c^*$ at most $2(z-1)$ points by both ranking him first in the vote. Therefore, the final total score $F^*$ of $c^*$ must satisfy
\begin{align*}
F^*&\le 5m+a_1+2(z-1)\\
&=z-2m-1+a_1+2(z-1)\\
&=3z+a_1-3-2m.
\end{align*}

Thus, the average score of the candidates $c_1,c_2,\cdots,c_m$ is at least $3z+a_1-3-2m$, which is equal to the highest score $c^*$ may get. In order to let $c^*$ win with non-zero probability, all $c_i(1\le i\le m)$ and $c^*$ must have the same final score $3z+a_1-3-2m$. To make $c_i$ have final score $3z + a_1 - 3 - 2m$, the manipulators must have $t_{1i} + t_{2i} = 3z+a_1-3-2m-P_i = a_{m+1-i}-2$. Recall that the equality of 
$\sum_{i=1}^m F_i = m\cdot(3z+a_1-3-2m)$ is obtained when the two manipulators give these $m$ candidates $m (m - 1)$ points, and thus, $\{t_{11} + 1, \cdots, t_{1m}+1\}$ and $\{t_{21} + 1, \cdots, t_{2m}+1\}$ must be a permutation of $\{1, \cdots, m\}$, which constitutes an instance of 2NMTS problem.
Therefore, candidate $c^*$ has a non-zero probability to win the election only if the answer to the instance of 2NMTS is ``Yes''.
\end{proof}

\begin{lemma}\label{lem10}
The answer to the constructed instance of 2-2BM is ``Yes'' if the answer to the corresponding instance of 2NMTS is ``Yes''.
\end{lemma}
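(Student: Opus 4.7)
My plan is to construct the manipulators' votes from the given 2NMTS solution $(p_1,p_2)$ together with the prescription in Table~\ref{tab:votes}, and then verify that every candidate's final score is at most $F^{*} = 19m + a_1 = v_{1*} + v_{2*} + 2(z-1)$. Concretely, both $M_1$ and $M_2$ rank $c^{*}$ first, contributing $2(z-1) = 14m$ to its total; for $1 \le i \le m$ I set $t_{1i} = p_1(m+1-i)-1$ and $t_{2i} = p_2(m+1-i)-1$, so that $\{t_{1i}\}_{i=1}^{m}$ and $\{t_{2i}\}_{i=1}^{m}$ are each a permutation of $\{0,\dots,m-1\}$ with $t_{1i} + t_{2i} = a_{m+1-i} - 2$; and the remaining scores $\{m,\dots,7m-1\}$ of both manipulators (and of $N_1$) are distributed over Groups 1--6 exactly as in Table~\ref{tab:votes}. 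A quick check of the six group slabs confirms that $M_1$ and $M_2$ each use every score in $\{m,\dots,7m-1\}$ exactly once, and that $N_1$'s unweighted scores on Groups 1--6 cover $\{0,\dots,6m\} \setminus \{2m\}$, complementing $\{2m\}$ on $c^{*}$ and $\{6m+1,\dots,7m\}$ on $c_1,\dots,c_m$.

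Next I would confirm that this forces $c^{*}$ to be a winner. For $c^{*}$ itself the total is $4m + (m + a_1) + 14m = F^{*}$. For each $c_i$ with $1 \le i \le m$, the total is $P_i + (a_{m+1-i} - 2) = 3z - 2m + a_1 - 3 = F^{*}$, so these $m$ candidates exactly tie $c^{*}$. For the remaining $6m$ candidates in Groups 1--6, using the virtual $N_2$ entries of Table~\ref{tab:votes}, the four contributions to each candidate's score form arithmetic progressions in $j$; evaluating the two endpoints gives a uniform upper bound of $19m$ on every group total, which is strictly less than $F^{*}$ because $a_1 \ge 2$.

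The main obstacle, which I would address last, is that the $N_2$ column of Table~\ref{tab:votes} uses the \emph{virtual} set $\{m+1,\dots,7m\}$, whereas $N_2$'s actual remaining scores are $\{0,\dots,7m\} \setminus (\{v_{2*}\} \cup \{v_{2i}\}_{i=1}^{m})$ and can include values below $m+1$. To reconcile this, I would first observe that $v_{2*} = m + a_1 \ge m+2$ and each $v_{2i} \ge 3m + 2 + a_1$, so all $m+1$ removed values lie in $\{m+2,\dots,7m\}$. A counting argument then shows that the $k$-th smallest element of the actual leftover set is at most $k + m$, which equals the $k$-th smallest element of the virtual set. Hence there is an order-preserving bijection $\phi$ from virtual scores to actual scores with $\phi(v) \le v$ for every $v$, and substituting $\phi(v)$ for each virtual entry in Table~\ref{tab:votes} yields a valid Borda vote for $N_2$ that can only decrease the totals in Groups 1--6, preserving the $F^{*}$ bound and ensuring that $c^{*}$ ties for first and wins with positive probability.
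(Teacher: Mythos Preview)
Your proposal is correct and follows essentially the same route as the paper's proof: use the 2NMTS solution to fill in the $\{0,\dots,m-1\}$ block for $M_1,M_2$, rank $c^*$ first in both manipulator votes, verify via the arithmetic progressions in Table~\ref{tab:votes} that each of the six groups stays at or below $19m < F^* = 19m+a_1$, and then replace the virtual $N_2$ scores by the actual leftover set via the rank-preserving map (the paper phrases this as ``give the $(7m-p+1)$-th highest score from $V$'', which is exactly your order-preserving bijection $\phi$ with $\phi(v)\le v$). Your write-up is in fact slightly more thorough than the paper's, since you explicitly check that $M_1,M_2,N_1$ each use every score exactly once and you justify the inequality $\phi(v)\le v$ by the counting argument on the $m+1$ removed values.
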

\begin{proof}
When the corresponding instance of 2NMTS is ``Yes'', there is a way to construct $\{t_{11}, \cdots, t_{1m}\}$ and $\{t_{21}, \cdots, t_{2m}\}$ such that $t_{1i} + t_{2i} + P_i = F^* = 19 m + a_1$ for all $1 \leq i \leq m$. Since $a_1\ge 2$, it suffices to make sure that $F_i \le 19m+2$ for all $m+1\le i\le 7m$. %Moreover, we have $a_1 \leq m+1$ since $\sum_i a_i = m(m+1)$ and $a_1 \leq a_2 \leq \cdots \leq a_m$. 

Recall the construction presented in Table~\ref{tab:votes} and notice that in each group, the scores constitute an arithmetic sequence. We can conclude that the highest scores from group $1$ to group $6$ are $19m$, $19m$, $19m$, $19m$, $18m-1$, $17m-1$, respectively. Therefore all candidates from $c_{m+1}$ to $c_{7m}$ would not get a higher total score than $19m+2$. Thus, $c^*$ would win with non-zero probability against the virtual scores.

To construct the vote for $N_2$ without duplicated scores, let $V = [z] \setminus \{v_{2*}, v_{2i}, \cdots, v_{2m}\}$. For a candidate $c$ that is given $p$ in the virtual scores, $N_2$ gives candidate $c$ the $(7m - p + 1)$-th highest score from $V$. By this construction, it is clear that the scores candidate $c$ gets in the vote is no more than what he can get under virtual scores, which concludes our proof for the if direction.
% Since in the real vote, any candidate from $c_{m+1}$ to $c_{7m}$ cannot get higher score than in the virtual vote, thus they cannot get higher total score than $c^*$ in the real vote. So the answer to the instance of 2-2BM is ``Yes" if the answer to the instance of 2NMTS is ``Yes". We have completed the reduction from 2NMTS to 2-2BM when $w=2$. 
\end{proof}
Combining Lemma~\ref{lem9} and~\ref{lem10}, we finish the reduction and demonstrate that 2-2BM is NP-hard when $w = 2$.

\subsection{Rational Weights}
\label{sec:real}
We generalize our complexity results to the cases where $w$ is a rational number.

\begin{theorem}\label{thm8}
2-2BM is NP-hard when one of the non-manipulators is weighted by $1 < w < 3$.
\end{theorem}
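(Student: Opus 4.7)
The plan is to extend the construction used in the proof of Theorem~\ref{thm5} (the $w=2$ case) to arbitrary rational $w \in (1,3)$, again reducing from 2NMTS. Write $w = p/q$ in lowest terms and choose the instance size $m$ to be a multiple of $q$, so that every weighted score of $N_1$ is an integer. The total number of candidates will be $z = c(w)\cdot m + 1$ for a carefully chosen constant $c(w)$; as $w \to 1^+$ the constant $c(w)$ must grow (since the margin between $F^\ast$ and the trivial upper bound on other candidates' scores shrinks), which is the reason we can no longer reuse the compact construction from the $w \geq 3$ case.

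First I would set the scores on $c^\ast$ and $c_1,\ldots,c_m$ essentially as in the $w=2$ proof: choose $v_{1\ast}, v_{2\ast}$ and for $1 \leq i \leq m$ set $v_{1i}$ to be one of the top $m$ weighted positions of $N_1$ and $v_{2i} = P_i - v_{1i}$ for a target $P_i$ that depends affinely on $a_{m+1-i}$. Together with the fact that manipulators can give $c^\ast$ at most $2(z-1)$ points, this forces the target score $F^\ast$ so that making $c^\ast$ a winner requires $t_{1i} + t_{2i} = a_{m+1-i} - 2$ for all $i$, giving the ``only if'' direction by the same argument as Lemma~\ref{lem9}. The arithmetic here is a routine generalization of the $w=2$ case once the parameters $v_{1\ast}, v_{2\ast}$ are recalibrated to $w$.

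The main work is the ``if'' direction, where the remaining $z - m - 1$ candidates must each have final score at most $F^\ast$. I would partition these candidates into $K(w)$ consecutive blocks of size $m$, and in each block assign arithmetic progressions of scores to $N_1$, $N_2$, $M_1$, $M_2$ (again working with ``virtual scores'' for $N_2$ and then projecting down to the actual remaining slots as in Lemma~\ref{lem10}). Within a block all four sequences are arithmetic, so the per-candidate total is also arithmetic and it suffices to check its two endpoints. The block endpoints for $N_1$ will trace out the non-top weighted positions $w(z - m - 1), w(z - m - 2),\ldots$ in decreasing order across blocks; the remaining $N_2,M_1,M_2$ endpoints will be chosen in each block so that the per-block maximum equals (roughly) $(1+w)z + O(m)$, matched to $F^\ast$. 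By picking $K(w)$ large enough (so that the arithmetic slopes inside blocks are mild), each block's maximum can be brought below $F^\ast$; for $w$ close to $1$ this forces $K(w) = \Omega(1/(w-1))$, but it remains $O(1)$ for any fixed $w$.

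The hard part will be choosing the precise endpoint scores in each of the $K(w)$ blocks and verifying, by a case analysis on $w$, that (a) the union of scores assigned across all blocks plus the top $m$ positions uses each score in $\{0,1,\ldots,z-1\}$ exactly once for each voter, and (b) both block endpoints produce totals $\leq F^\ast$. I expect to handle this by splitting $(1,3)$ into two subintervals, say $(1, 2]$ and $[2, 3)$, using the $w=2$ pattern of six blocks as a template for the first and a simpler three or four block pattern for the second, with the block breakpoints depending linearly on $w$. Once this construction is given, the verification reduces to a finite list of linear inequalities in $w$ and $m$ (with $m$ sufficiently large depending on $q$), analogous to the two inequalities $3w+6 \leq w^2+2w+2$ and $w^2+w+3 \leq w^2+2w+2$ in the proof of Lemma~\ref{lem132}, so the argument closes in the same style.
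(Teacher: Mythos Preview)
Your high-level plan matches the paper's: reduce from 2NMTS, put $c_1,\ldots,c_m$ on the top $m$ positions of $N_1$, choose $v_{2i}$ so that the manipulators are forced to hit $t_{1i}+t_{2i}=a_{m+1-i}$ exactly, and fill the remaining candidates by blocks in which every voter's scores form an arithmetic progression (so only endpoints need checking), using virtual scores for $N_2$ and projecting down afterwards. That much is right.

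There is, however, a genuine gap. You correctly observe that the number of blocks must satisfy $K(w)=\Omega\bigl(1/(w-1)\bigr)$ as $w\to 1^{+}$, and then immediately propose to cover all of $(1,2]$ by ``the $w=2$ pattern of six blocks as a template.'' These two statements are incompatible: no fixed finite block pattern can work on any interval whose closure contains $w=1$. The paper resolves exactly this issue by introducing a free integer parameter $p$, taken ``sufficiently large'' as a function of $\epsilon=w-1$, and setting $z=\lfloor(2+\epsilon)m\rfloor\cdot p+4m+1$. The remaining $D=\lfloor(2+\epsilon)m\rfloor\cdot p+3m$ candidates are then organised into $2p+3$ groups of size $m$ plus a residual set of $R=D-(2p+3)m$ candidates. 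The group scores for $N_2$ and $M_1$ are generated iteratively, alternately adding $m$ and $\lfloor(1+\epsilon)m\rfloor$ from one group to the next, while $M_2$'s groups decrement by $m+1$; this interlocking scheme (not a static table) is what makes every endpoint inequality close once $p$ is large enough, and it also guarantees that the score sets used by each voter partition correctly. Your proposal does not supply any such mechanism, and the ``finite list of linear inequalities in $w$ and $m$'' you anticipate will not in fact be finite if $K(w)$ is allowed to depend on $w$.

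A secondary difference: you want to force $q\mid m$ (where $w=p/q$) so that all weighted scores are integers. The paper does not do this; it keeps $N_1$'s scores non-integral and absorbs the fractional parts via floors and ceilings, e.g.\ $v_{2i}=\lfloor F^{*}-v_{1i}-a_i\rfloor$ and $v_{2*}=\lceil(1+\epsilon)(D+1)+a_1-D-m-v_{1*}\rceil$. Your route would require an additional argument that 2NMTS stays NP-hard when $m$ is restricted to multiples of $q$, whereas the paper only needs $m>\max\{3,3/\epsilon\}$ (its Lemma~\ref{lem7}), which is immediate since only finitely many instances are excluded.
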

\newcommand{\up}[1]{\left\lceil#1\right\rceil}
Suppose that $w=1+\epsilon$ with $0<\epsilon<2$. We prove the theorem by a reduction from a restricted form of 2NMTS:

\begin{lemma}\label{lem7}
\indent Suppose that there are $m>\max\{3, \frac{3}{\epsilon}\}$ and $m\in \mathbb{Z}$ given integers $0\le a_1\le a_2\le \cdots\le a_m\le 2(m-1)$, where $\sum_{i=1}^m s_i=m(m-1)$. Whether there exist two 0-to-$(m-1)$ permutations $p_{1}$ and $p_{2}$, s.t $\forall 1\le j \le m, p_{1}(j)+p_{2}(j)=a_j$ is an NP-hard problem. 
\end{lemma}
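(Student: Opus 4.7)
The plan is to reduce from the standard 2NMTS problem (established NP-hard earlier in the paper) by composing two transformations: a padding step that inflates the instance size past the threshold $\max\{3, 3/\epsilon\}$, and a shift step that translates the value range from $[2, 2m]$ to $[0, 2(m-1)]$ and the permutation range from $\{1,\dots,m\}$ to $\{0,\dots,m-1\}$. Since $\epsilon$ is fixed with the problem, $\max\{3, 3/\epsilon\}$ is a constant, so the padding has only constant size and the reduction is polynomial.

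I would first handle the padding. Given a standard 2NMTS instance $(a_1,\dots,a_m)$ with $2\le a_i\le 2m$ and $\sum_i a_i=m(m+1)$, I would append $k$ new values $a_{m+j}=2(m+j)$ for $j=1,\dots,k$, choosing $k$ so that $m':=m+k>\max\{3, 3/\epsilon\}$. A direct computation verifies that the extended sequence is non-decreasing, that each $a_i\le 2m'$, and that $\sum_{i=1}^{m'} a_i = m(m+1) + 2\sum_{j=1}^{k}(m+j) = m'(m'+1)$. The key claim is that the padded instance is a yes-instance iff the original is: in any candidate solution for the padded instance, $p_1(m')+p_2(m')=2m'$ forces $p_1(m')=p_2(m')=m'$; then $p_1(m'-1)+p_2(m'-1)=2(m'-1)$, combined with $m'$ being used, forces $p_1(m'-1)=p_2(m'-1)=m'-1$; a downward induction on $j$ then pins $p_1(m+j)=p_2(m+j)=m+j$ for every $j$. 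The first $m$ positions therefore use permutation values drawn from $\{1,\dots,m\}$ and must satisfy the original 2NMTS constraints exactly.

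Next I would apply the shift. Given the padded instance $(a_1,\dots,a_{m'})$, set $a'_i=a_i-2$. Then $0\le a'_i\le 2(m'-1)$, monotonicity is preserved, and $\sum_i a'_i = m'(m'+1)-2m' = m'(m'-1)$, matching the format of the restricted problem. The map $q_r(j):=p_r(j)-1$ is a bijection between 1-to-$m'$ permutations $p_1,p_2$ with $p_1(j)+p_2(j)=a_j$ and 0-to-$(m'-1)$ permutations $q_1,q_2$ with $q_1(j)+q_2(j)=a'_j$, so the two instances share the same answer. Because $m'>\max\{3, 3/\epsilon\}$ by construction, the produced instance falls within the restricted form required by the lemma, completing the reduction.

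The main obstacle is making the forcing argument in the padding step watertight, since one must rule out alternative pairings in the padded region (for example, using $(m+j-1,m+j+1)$ in place of $(m+j,m+j)$). The downward induction above resolves this by exhausting the largest permutation values from the top, and once it is in place the remaining bookkeeping (sum identity, value-range inequalities, the shift bijection) is routine.
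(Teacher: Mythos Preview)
Your argument is correct. The paper's own proof takes a shorter and less constructive route: it simply observes that restricting to instances with $m>\max\{3,3/\epsilon\}$ cannot remove NP-hardness, since $\epsilon$ is a fixed constant and hence there are only finitely many (up to brute-force enumeration) instances below the threshold; the value-range shift from $[2,2m]$ to $[0,2(m-1)]$ is then handled, as you do, by subtracting $1$ from each permutation entry. Your approach instead gives an explicit many-one reduction by padding every instance up past the threshold and proving a forcing lemma to show answer equivalence. The padding-and-forcing route is more work but yields a genuine Karp reduction that does not rely on hard-coding small cases or appealing to ``finitely many instances''; the paper's route is quicker but is really a Turing-reduction-style argument dressed up as a remark. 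Both are standard ways of handling a constant lower bound on instance size, and either suffices here.
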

\begin{proof}
This is a restricted form of 2NMTS. Suppose there is a polynomial algorithm which is able to solve every instance in the restricted 2NMTS problem, we can enumerate all the finite instances of 2NMTS when $m>\max\{3, \frac{3}{\epsilon}\}$ and run the algorithm for restricted 2NMTS when $m>\max\{3, \frac{3}{\epsilon}\}$. Henceforth, by adding $1$ to each term of the two permutations, we get a polynomial algorithm for 2NMTS. Therefore we complete the reduction from restricted 2NMTS to 2NMTS, and thus, the restricted form of 2NMTS is NP-hard.  
\end{proof}
\newcommand{\sqb}[1]{\left[#1\right]}

Given an instance $a_1\le a_2\le\cdots\le a_m$ of restricted 2NMTS, we construct a corresponding instance of the 2-2BM.
When $m>\frac{3}{\epsilon}$, we set $z=D+m+1$, where $D=\lfloor(2+\epsilon) m\rfloor \cdot p+3m$ and $p$ is a sufficiently large integer. %$p$ is an integer satisfying $p\ge \max\left\{\frac{5m-a_1-(m+1)(1+\epsilon)}{[(2+\epsilon)m]-2m-2}\right.$, $\up{\frac{5m-2-\epsilon-a_1-\epsilon m}{2\epsilon m}}$, $\left.\frac{(6-\epsilon)m-a_1-\epsilon}{[(2+\epsilon)m]}\right\}$. 
\begin{figure}[!h]
    \centering
    \includegraphics[scale=1]{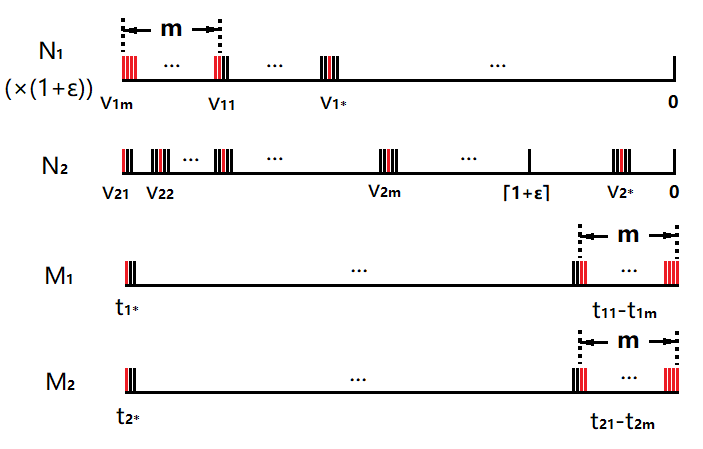}
    \caption{Vote constructions for the $\{c_1, \cdots, c_m, c^*\}$ are shown as red vertical bars}% and `` $(\times (1+\epsilon))$" means that $N_1$'s vote is weighted by $1+\epsilon$.}
    \label{fig:V3}    
\end{figure}
\paragraph{Construction for $\{c_1, \cdots, c_m, c^*\}$} For candidate $c^*$ and $c_1$, the votes from $N_1$ and $N_2$ are:
\begin{align*}
& v_{1*}=(1+\epsilon)\left\lfloor\frac{(1+\epsilon)(D+1)+a_1-D-m}{1+\epsilon}\right\rfloor,\\
& v_{2*}=\up{(1+\epsilon)(D+1)+a_1-D-m-v_{1*}},\\
& v_{11}=(1+\epsilon)(D+1) \mbox{~and~} v_{21}=D+m.
\end{align*}

First, notice that $v_{11}+v_{21}=(1+\epsilon)(D+1)+D+m$, $v_{11}+v_{21}+a_1+1>2(D+m)+ v_{1*}+v_{2*}\ge v_{11}+v_{21}+a_1$, where only the candidate with the highest rank in a manipulator's vote can receive $(D+m)$ points. Therefore, to ensure that $c^*$ wins with non-zero probability, they must give candidate $c_1$ at most $a_1$ points in total. Let $F^*=v_{1*}+v_{2*}+2(D+m)$, which is the largest possible final score of $c^*$. We construct the scores given to $c_i(2\le i \le m)$ from the non-manipulators:
\[
    v_{1i}=(1+\epsilon)(D+i) \quad\mbox{and}\quad v_{2i}=\lfloor F^*-v_{1i}-a_i\rfloor.
\]

We first argue that there is no duplicate score in our construction so far. For $N_1$, the scores he gives to candidates $c_i$ with $1 \leq i \leq m$ are monotonically increasing. Moreover, the scores he gives $c^*$ is 
$v_{1*} \le (1+\epsilon)(D+1)+a_1-D-m+1+\epsilon<(1+\epsilon)(D+1) = v_{11}$. As for $N_2$, notice that both $v_{1i}$ and $a_i$ are increasing as $i$ increases. In addition, when $i$ increases by $1$, $v_{1i}$ increases by $(1 + \epsilon) > 1$, and thus, $v_{2i}$ decreases by at least $1$. Therefore, $v_{2i}$ is strictly decreasing as $i$ increases. Finally, with sufficiently large $p$, the difference between $v_{1*}$ and $v_{11}$ is at least $D + m - a_1 - 1 - \epsilon > 1$. Therefore, there is no duplicate score in $N_2$, either.

\begin{figure}\label{fig2}[!h]
     \centering
    % \begin{adjustbox}{max width=0.9\textwidth}
    \includegraphics[scale=0.9]{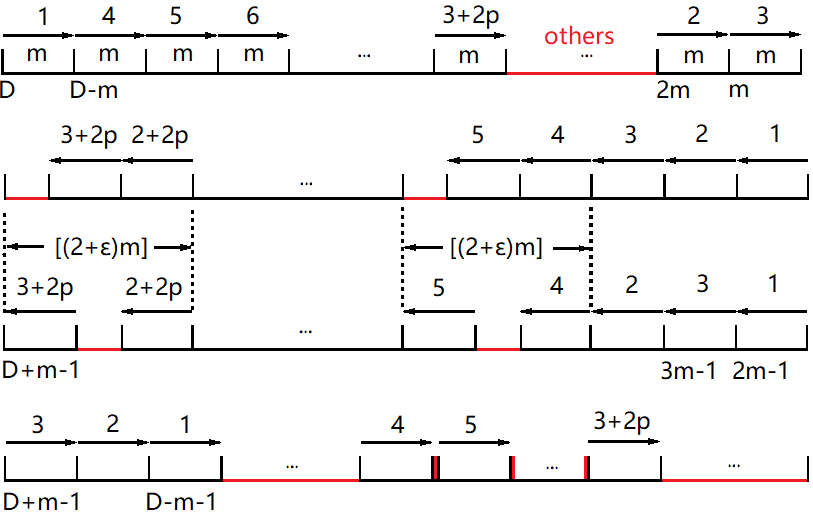}
    % \end{adjustbox}
    \caption{Vote construction for the remaining $D$ candidates. From the top to the bottom are $N_1$, $N_2$, $M_1$ and $M_2$'s votes respectively. Each segment contains $m$ scores and the labels below some of the segments show the highest score in some of the groups.}
    \label{fig:V2}    
\end{figure}
\paragraph{Construction for $(2p+3)m$ candidates in groups}
Since the used scores for both $N_1$ and $N_2$ are not consecutive, we again consider virtual scores to facilitate our analysis. Since $N_1$ has arranged all of her highest $m$ scores, in the context of virtual scores, the highest remaining $D$ scores are $\{(1+\epsilon), (1+\epsilon) \cdot 2, \cdots, (1+\epsilon) D\}$. For $N_2$, since $N_2$ has used the highest score $(D+m)$, the possible highest $D$ scores in the context of virtual scores are $\{m, m+1, \cdots, D+m-1\}$. 

%For $M_1$ and $M_2$, the original remaining $D$ scores are consecutive, so we do not need to use virtual vote technique. 

Now we construct the votes based on the virtual remaining scores as shown in Figure~\ref{fig:V2}. We will fully construct the votes from $N_1$ and $N_2$ and use all the scores except $\{0, 1, \cdots, m-1, D+m\}$ to construct votes from $M_1$ and $M_2$.

We divide the further construction into 2 steps. In the first step we select $(2p+3)$ groups of candidates. Each group contains $m$ candidates receiving consecutive scores from each voter. Each group is denoted in Figure~\ref{fig:V2} as a line segment with its group number over it. The arrows in the figure show the inner rank within each group. i.e. supposing group 4 contains $c_{a}$ to $c_{a+m-1}$, then we have $v_{1i}=(D-i+a-m)(1+\epsilon)$, $v_{2i}=t_{1i}=4m+i-a$, $t_{2i}=D+a_1-6m+\lfloor(m+1)\epsilon\rfloor-i+a$, $\forall i\in \mathbb{Z}, a\le i \le a+m-1$.

For $N_1$, we select the scores for the groups as shown in the figure. Group 2 and group 3 receive the last $2m$ scores. From group 4, the arrangement of the group is consecutive right after group 1. The red bars denote the other candidates' scores, after removing these groups of candidates.

For $N_2$ and $M_1$, the first four groups occupy the lowest $4m$ scores left for the two voters. Starting from group 5, each time we generate the scores of a group, we select $m$ scores such that (1) they are respectively $m$ points higher than the last group of the same voter; (2) they are respectively $\lfloor(1+\epsilon)m\rfloor$ points higher than the last group of the other voter. More precisely, for $N_2$ and $M_1$, we alternate in (1) and (2) between $N_2$ and $M_1$ to generate the scores. For example, in $N_2$'s voting, each member in group 5 is $m$ points larger than the member in group 4 respectively and each member in group 6 is $\lfloor(1+\epsilon)m\rfloor$ points larger than a member in group 5, while in $M_1$'s voting, each member in group 5 is $\lfloor(1+\epsilon)m\rfloor$ points larger than the member in group 4 respectively and each member in group 6 is $m$ points larger than the member in group 5. Therefore, all candidates except the candidates in group 1, 2, 3 can be partitioned into larger groups such that each group has $\lfloor(2+\epsilon)m\rfloor$ candidates. Since $D=\lfloor(2+\epsilon)m\rfloor \cdot p+3m$, there are $p$ larger groups. We can also derive that the biggest score in the group $2p+3$ of $M_1$ is $D+m-1$, which is the biggest score left for him. 

For $M_2$, the first three groups occupy the highest $3m$ scores. 
%\yiheng{Since the gap may not exist, the last inequality $\lfloor\epsilon (m+1)\rfloor + a_1 + 2 \leq 4m$ may not hold. Thus here I suppose that we could change to the following version:
%
%There may be a gap between group 1 and group 4 and the largest score in group 4 is $D+a_1-6m+\lfloor\epsilon (m+1)\rfloor \le D+a_1-6m+\epsilon (m+1)<D+a_1-6m+2(m+1)\leq D-3m+1<D-2m$, due to $a_1\leq m-1$, $m>\frac{3}{\epsilon}$ and $\epsilon<2$. $D-2m$ is the smallest score in group 1, thus group 4 and group 1 do not overlap.
%}
There is a gap between group 1 and 4, which is from $D+a_1-6m+\lfloor\epsilon (m+1)\rfloor+1$ to $D-2m-1$. Moreover, such a gap exists since we have $\lfloor\epsilon (m+1)\rfloor + a_1 + 2 \leq 4m$ due to $a_1 \leq m-1$, $m \geq 3$, and $\epsilon \leq 2$.

Starting from group 5, each member in the newly generated group is $m+1$ points lower than each member in the previous group respectively.
By our construction, the lowest score in group $2p+3$ is $s_l$, which is at most
\begin{align*}
&~ D+a_1-6m+2+\lfloor(m+1)\epsilon\rfloor-2p (m+1)\\
\ge&~ D+a_1-7m+(m+1)\epsilon-1-(m+1)(2p-1)+1\\
=&~ \lfloor(2+\epsilon)m\rfloor \cdot p+a_1-4m+(m+1)(\epsilon-2p+1) \\
=&~ (\lfloor(2+\epsilon)m\rfloor-2m-2)p+a_1-4m+(m+1)(1+\epsilon).
\end{align*}
\noindent Since $m>\frac{3}{\epsilon}$, with sufficiently large $p$, we have $s_l\ge m$.

%First we prove that the scores given in the groups in the voting of $M_2$ is feasible, i.e. they do not duplicate with the given score form $0$ to $m-1$ and the groups do not overlap. The lowest score in the group $2p+3$ is $s_l=D+a_1-6m+[(m+1)\epsilon]-(m+1)(2p-1)-m+1$. The highest score in group 4 is $s_h=D+a_1-6m+[(m+1)\epsilon]$. 

%Since $a_1\le m-1$ and $\epsilon <2$, $s_h \le D-6m+m-1+(m+1)\cdot 2 =D-3m+1<D-2m$. $D-2m$ is the lowest total score in group 1. So the lowest score satisfy the constraints and all groups do not overlap. The allocation of scores in the groups is feasible.

\paragraph{Construction for the remaining candidates} 
All that remains to construct is to arrange the scores for the candidates who are not in the $2p+3$ groups (represented as red bars in Figure~\ref{fig:V2}). There are $R=D-(2p+3)m=\lfloor(2+\epsilon)m\rfloor\cdot p-2pm$ scores and candidates left. We denote these $R$ candidates by $c_{r_1}, c_{r_2}, \cdots, c_{r_R}$. In both $N_1$ and $N_2$'s votes, we sort the remaining scores from high to low and give $c_{r_i}$ the $i$-th highest remaining score. In both $M_1$ and $M_2$'s votes, we sort the remaining scores from low to high and give $c_{r_i}$ the $i$-th lowest score.
We finish our reduction by showing that the  answer to the constructed instance of 2-2BM is “Yes” if and only if the answer to the corresponding instance of restricted form of 2NMTS is “Yes”.
\begin{lemma} \label{nlem2}
The  answer to the constructed instance of 2-2BM is “Yes” only if the answer to the corresponding instance of restricted form of 2NMTS is “Yes”.
\end{lemma}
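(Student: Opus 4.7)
The plan is to mirror the only-if argument of Lemma~\ref{lem9}: derive matching upper and lower bounds on $\sum_{i=1}^{m}(t_{1i}+t_{2i})$ that force the manipulators' scores on $c_1,\dots,c_m$ to be two permutations of $\{0,1,\dots,m-1\}$ with $t_{1i}+t_{2i}=a_i$, which is exactly a ``Yes'' certificate for the restricted 2NMTS instance.

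First I would upper bound $c^*$'s final score: since no manipulator can award more than $D+m$ points to any single candidate, $c^*$'s score is at most $F^{*}=v_{1\ast}+v_{2\ast}+2(D+m)$. For $c_1$ the construction already supplies the sandwich $v_{11}+v_{21}+a_1\le F^{*}<v_{11}+v_{21}+a_1+1$, so demanding $c^*$ to beat $c_1$ and using integrality of $t_{11}+t_{21}$ forces $t_{11}+t_{21}\le a_1$. For $2\le i\le m$, the definition $v_{2i}=\lfloor F^{*}-v_{1i}-a_i\rfloor$ places $F^{*}-v_{1i}-v_{2i}$ in $[a_i,a_i+1)$, and the same integrality argument yields $t_{1i}+t_{2i}\le a_i$. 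Summing these and using $\sum_i a_i=m(m-1)$ gives $\sum_{i=1}^{m}(t_{1i}+t_{2i})\le m(m-1)$. On the other hand, each manipulator $M_j$ assigns $m$ distinct non-negative integer scores to $\{c_1,\dots,c_m\}$, so $\sum_{i=1}^{m} t_{ji}\ge 0+1+\cdots+(m-1)=m(m-1)/2$, and summing over $j\in\{1,2\}$ produces the reverse inequality. Every step must therefore be tight: $t_{1i}+t_{2i}=a_i$ for every $i$, and $\{t_{j1},\dots,t_{jm}\}=\{0,1,\dots,m-1\}$ for both $j$. Setting $p_j(i)=t_{ji}$ completes the reduction.

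The main delicate point I anticipate is verifying that the nested floor and ceiling operations in the definitions of $v_{1\ast}$, $v_{2\ast}$, and $v_{2i}$ really place $F^{*}-v_{1i}-v_{2i}$ inside $[a_i,a_i+1)$; this boils down to recording that $v_{1\ast}+v_{2\ast}$ lies in a length-$1$ window starting at $(1+\epsilon)(D+1)+a_1-D-m$, and that each $v_{1i}$ is an exact rational multiple of $1+\epsilon$, so the relevant fractional parts line up cleanly. Notably, the elaborate construction on the remaining $D$ candidates (the red bars in Figure~\ref{fig:V2}) plays no role in this direction; those candidates become relevant only for the converse implication, where one must certify that none of them outscores $c^*$ under the manipulation produced from a 2NMTS solution.
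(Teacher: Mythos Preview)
Your proposal is correct and follows essentially the same argument as the paper's own proof: both derive the per-candidate upper bound $t_{1i}+t_{2i}\le a_i$ from the floor definition of $v_{2i}$ (and the analogous sandwich for $i=1$), and then match it against the lower bound $\sum_{i}(t_{1i}+t_{2i})\ge m(m-1)$ coming from distinctness of manipulator scores to force equality everywhere. Your write-up is slightly more explicit about the integrality step and the conclusion that each $\{t_{j1},\dots,t_{jm}\}$ must equal $\{0,\dots,m-1\}$, but the underlying route is identical.
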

\begin{proof}
Since $F^*-v_{1i}-a_i-1<v_{2i}$, if two manipulators give the candidate $c_i$ more than $a_i$ points in total, the total score of candidate $c_i$ will be at least $v_{1i}+v_{2i}+a_i+1>F^*-v_{1i}-a_i-1+a_i+1+v_{1i}=F^*$. Therefore,  manipulators can give $c_i$ at most $a_i$ points. 

Consider the scores given by the manipulators to candidates $\{c_1, \cdots, c_m\}$. %We have proved that the sum of the two manipulators' score on candidate $c_i(1\le i \le m)$ should not surpass $a_i$. 
Notice that the lowest $m$ scores from both manipulators sum up to $2\sum_{i=0}^{m-1}i=(m-1)\cdot m=\sum_{i=1}^{m} a_i$. Henceforth, the sum of their scores given to $c_i$ with should be exactly $a_i$ for any $1\le i\le m$. Therefore, to ensure $c^*$ can win with non-zero probability, the answer to the corresponding instance of restricted 2NMTS must be ``Yes''.
\end{proof}

\begin{lemma}\label{nlem3}
The answer to the constructed instance of 2-2BM is “Yes” if the answer to the corresponding instance of restricted form of 2NMTS is “Yes”.
\end{lemma}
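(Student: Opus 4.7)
The plan is to construct the manipulators' votes from a solution $(p_1, p_2)$ of the restricted 2NMTS instance and then verify that every candidate other than $c^*$ finishes with total score at most $F^* = v_{1*} + v_{2*} + 2(D+m)$. Both manipulators place $c^*$ at the top of their ballot (so $t_{j*} = D+m$); for $1 \le i \le m$ they set $t_{1i} = p_1(i)$ and $t_{2i} = p_2(i)$; and the remaining $D$ scores from $\{m, m+1, \ldots, D+m-1\}$ are assigned to the other $D$ candidates exactly as specified by the pattern in Figure~\ref{fig:V2}.

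First, $c^*$ reaches the target score $F^*$ by construction. For each $c_i$ with $1 \le i \le m$, its total is $v_{1i} + v_{2i} + t_{1i} + t_{2i} = v_{1i} + v_{2i} + a_i \le F^*$, where the inequality comes from the floor in the definition of $v_{2i}$ together with $p_1(i)+p_2(i)=a_i$. This settles the ``front'' candidates and leaves the $D$ group candidates and the $R$ leftover (red-bar) candidates.

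For the $2p+3$ groups, the key observation is that within any single group each of the four voters contributes an arithmetic sequence of scores, so the vector of candidate totals in that group is also arithmetic. It therefore suffices to bound the totals at the two endpoints of each group. Using the explicit shift pattern of Figure~\ref{fig:V2} --- $N_1$'s block dropping by $(1+\epsilon)m$ between successive groups starting from group~$4$, $N_2$ and $M_1$ alternating shifts of $+m$ and $+\lfloor(1+\epsilon)m\rfloor$ from group~$4$ onward, and $M_2$ dropping by exactly $(m+1)$ per group from group~$5$ --- the per-group endpoint totals are non-increasing in the group index for $k \ge 5$. Hence it suffices to give an explicit bound for groups $1,2,3,4$ and the boundary group~$5$, each of which can be written as a short expression in $m$, $\epsilon$, and $a_1$ and compared against $F^*$.

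For the leftover red-bar candidates, $N_1$ and $N_2$ distribute their remaining scores in decreasing order while the two manipulators distribute theirs in increasing order, so the pointwise total is maximized at $c_{r_1}$, whose non-manipulator scores sit just above the group-$1$ scores. Comparing this candidate's total against the group-$1$ endpoint (already bounded by $F^*$) closes the estimate. The main obstacle will be the bookkeeping needed to absorb the floor and ceiling slack in $v_{1*}, v_{2*}, v_{2i}$ and in $D = \lfloor(2+\epsilon)m\rfloor\cdot p + 3m$, and to verify that the alternating $+m$/$+\lfloor(1+\epsilon)m\rfloor$ shifts really do compensate for $N_1$'s $(1+\epsilon)m$ drop up to an additive constant; the hypotheses $m > 3/\epsilon$ and ``$p$ sufficiently large'' are exactly what make these $O(1)$ corrections harmless, so carrying out that verification carefully will be the longest part of the argument.
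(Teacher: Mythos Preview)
Your overall plan is the same as the paper's: put $c^*$ first in both manipulator ballots, use the 2NMTS solution on $c_1,\ldots,c_m$, then separately bound the $2p+3$ group candidates and the $R$ red-bar candidates. The group part of your sketch matches the paper's Lemma~\ref{nlem0} essentially verbatim (arithmetic sequences within a group, check the first few groups explicitly, show $h_i<h_{i-1}$ for $i\ge 5$).

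The gap is in your treatment of the $R$ red-bar candidates. You assert that because $N_1,N_2$ hand out their remaining scores in decreasing order while $M_1,M_2$ hand out theirs in increasing order, ``the pointwise total is maximized at $c_{r_1}$.'' That inference is not valid: two decreasing sequences plus two increasing sequences can have their sum peak anywhere, and here the leftover score sets for $N_2$, $M_1$, $M_2$ are non-consecutive with irregular jumps, so no uniform step-size comparison is available. The paper's Lemma~\ref{nlem1} works for a different reason that you have not identified: the leftover score sets of $N_2$ and $M_1$ in the range $\{5m,\ldots,D+m-1\}$ are \emph{reflections} of each other about the midpoint $(D+6m-1)/2$ (this is what the alternating $+m$ / $+\lfloor(1+\epsilon)m\rfloor$ shifts are engineered to produce). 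Consequently, pairing $N_2$'s $i$-th highest leftover with $M_1$'s $i$-th lowest leftover gives a constant sum $x_i+y_i=D+6m-1$ for every $i$. Only after this constancy is established do $N_1$ (step $-(1+\epsilon)$) and $M_2$ (virtual step $+1$) determine the location of the maximum, which is then indeed $c_{r_1}$. Your proposed shortcut of ``comparing against the group-$1$ endpoint'' does not work either, since $c_{r_1}$'s four component scores sit in entirely different positions relative to group~$1$ across the four voters; the paper instead computes $c_{r_1}$'s total directly and bounds it by $F^*$ using the largeness of~$p$.
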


When both manipulators rank $c^*$ in their first place, the total score of $c^*$ is $F^* \geq (1 + \epsilon) (D+1) + D + m + a_1$. Moreover, when the answer to the corresponding instance of restricted form of 2NMTS is “Yes”, by using scores from $\{0, \cdots, m-1\}$ only, the manipulators can ensure the total score of candidate $c_i$ with $1 \leq i \leq m$ is at most $F^*$. The remaining proof of Lemma~\ref{nlem3} is separated into Lemma~\ref{nlem0} and~\ref{nlem1}. These two lemmas together demonstrates that for any $c_i$ with $m+1 \leq i \leq D+m$, her total score is at most $F^*$.

% (Part II)
\begin{lemma}\label{nlem0}
There is no candidate in the $2p+3$ groups whose total score is more than $F^*$.
\end{lemma}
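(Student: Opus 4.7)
My plan is to exploit the arithmetic structure of the construction in Figure~\ref{fig:V2}. Within each group of $m$ consecutive candidates, each of the four voters assigns scores that form an arithmetic progression in the within-group index, with constant step $\pm 1$ for $N_2$, $M_1$, $M_2$ and $\pm(1+\epsilon)$ for $N_1$. Hence the candidate-by-candidate total score is itself an arithmetic function of the within-group index, and its maximum over the group is attained at one of the two endpoints. This reduces the check from $(2p+3)m$ candidates to at most $2(2p+3)$ endpoint calculations.

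First I would compute $F^*$ explicitly: $F^* = v_{1*}+v_{2*}+2(D+m) = (1+\epsilon)(D+1) + a_1 + D + m$ (the ceiling in $v_{2*}$ cancels the floor in $v_{1*}$ by design). Then for each of the four ``anchor'' groups $1$–$4$ I would plug the row-by-row formulas from the figure into the four voters' contributions and evaluate both endpoints directly; for group $4$, for instance, the totals take the form $(1+\epsilon)(D+a-m) + (4m+i-a) + (4m+i-a) + (D+a_1-6m+\lfloor(m+1)\epsilon\rfloor-i+a)$, which one simplifies and compares against $F^*$. Groups $1$, $2$, $3$ absorb the highest scores of $N_1$ and $M_2$ respectively, so their totals are bounded by similar direct computations.

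For groups $5$ through $2p+3$, the key observation is that the per-candidate total shifts monotonically from one group to the next. In $N_1$'s row, the scores drop by $(1+\epsilon)m$ from any group to the next higher-indexed one. By the alternating rule, the combined increment from $N_2$ and $M_1$ across a pair of successive groups is exactly $m + \lfloor(1+\epsilon)m\rfloor$, while $M_2$'s contribution decreases by $m+1$ per group. The net change in the per-candidate total across successive groups is therefore bounded above by a non-positive quantity, so the endpoint totals in groups $5, 6, \dots, 2p+3$ are all dominated by those in group $4$ or $5$, and no new independent check is required.

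The main obstacle will be bookkeeping the $O(1)$ rounding introduced by the floor/ceiling operators ($v_{1*}$, $v_{2*}$, $\lfloor(m+1)\epsilon\rfloor$, $\lfloor(2+\epsilon)m\rfloor$) as they propagate through the arithmetic. Fortunately, $F^*$ carries a slack of $\Omega(m)$ coming from the $+m+a_1$ term, and the assumption $m > 3/\epsilon$ guarantees this slack dominates the constant-size rounding errors, so the inequality $F_i \le F^*$ survives throughout. With this, each of the $2p+3$ endpoint checks reduces to verifying an inequality of the form (linear in $m$) $\le$ (linear in $m$ with a larger leading coefficient), which is routine given the bound on $m$.
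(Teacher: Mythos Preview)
Your plan matches the paper's proof: reduce each group to its extremal candidate via the arithmetic-progression structure, verify groups $1$--$4$ directly, and then show the group maxima are monotonically decreasing for $i\ge 5$ (the paper's one-line recursion $h_i=h_{i-1}-(1+\epsilon)m+\lfloor(1+\epsilon)m\rfloor-1<h_{i-1}$). The paper sharpens your endpoint observation slightly: since $N_1$'s within-group step $1+\epsilon$ exceeds the unit steps of the other three voters and $N_1$ runs high-to-low, the group maximum always sits at the candidate with the highest $N_1$ score, so only that single endpoint needs to be checked.

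There is one concrete slip. You assert that every endpoint check reduces to ``(linear in $m$) $\le$ (linear in $m$ with larger leading coefficient)'' and that the slack $m+a_1$ together with $m>3/\epsilon$ suffices. That holds for groups $1$ and $4$, but not for groups $2$ and $3$. For group $2$ the paper computes $h_2=(1+\epsilon)(2m)+2m+3m+D-1=(7+2\epsilon)m+D-1$, and the required inequality is $(6+2\epsilon)m\le (1+\epsilon)D+O(1)$. If you only use $D\ge 3m$, the left side dominates for small $\epsilon$; the inequality holds only because $D=\lfloor(2+\epsilon)m\rfloor\cdot p+3m$ with $p$ taken sufficiently large. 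The paper invokes ``sufficiently large $p$'' explicitly at this step; your proposal never mentions $p$ and attributes the slack to the wrong source. This is easy to repair once noticed, but as written your justification for groups $2$ and $3$ does not go through. (A minor aside: $F^*$ satisfies $F^*\ge(1+\epsilon)(D+1)+a_1+D+m$ rather than equality---the ceiling in $v_{2*}$ can overshoot---but only the lower bound is needed, so this does not affect the argument.)
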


\begin{proof}
Since within each group the scores of the candidates are consecutive, forming an arithmetic sequence, the highest score in each group should appear in either the leftmost place or the rightmost place. Notice that in $N_1$ and $M_2$ the candidates are arranged from higher score to lower score and in $N_2$ and $M_1$ the candidates are arranged from lower score to higher score. In addition, the common difference of scores in each group of $N_1$'s vote is $1+\epsilon$, larger than the difference in others' votes. Therefore, the candidate with the highest total score in each group is the leftmost candidate in the group appearing in $N_1$. Let $h_i$ with $1\le i \le 2p+3$ be the highest total score among candidates in group $i$.

First we prove that the total score of candidates in group 1 is at most $F^*$: since $h_1=(1+\epsilon) D+m+m+D-2m=(2+\epsilon)D<(1+\epsilon)(D+1)+D+m+a_1\le F^*$, no candidate in group 1 has a total score higher than $c^*$.

For group 2 and group 3, notice that the candidate in group 2 have higher scores than the candidates in group 3, and therefore, we only need to consider $h_2$. We have $h_2= (1+\epsilon)(2m)+2m+3m+D-1=(7+2\epsilon)m+D-1\leq(1+\epsilon)(D+1)+D+a_1+m\le F^*$, where the second-to-the-last inequality is true by selecting a sufficiently large $p$.
For group 4, since $h_4=(1+\epsilon)(D-m)+4m+4m+D+a_1-6m-1+\lfloor(m+1)\epsilon\rfloor\le D(2+\epsilon)+m+a_1-1+\epsilon<(1+\epsilon)(D+1)+D+a_1+m\le F^*$.

For group $i$ with $i\ge 5$, first we notice that
\begin{align*}
h_i&= h_{i-1}-(1+\epsilon)m+\lfloor(1+\epsilon)m\rfloor-(m+1)+m\\
&=h_{i-1}-(1+\epsilon)m+\lfloor(1+\epsilon)m\rfloor-1\\&<h_{i-1}.
\end{align*}
Therefore, $h_i \leq h_4 \leq F^*$ for all $i \geq 5$.
\end{proof}

% (Part III)
\begin{lemma}\label{nlem1}
There is no candidate among remaining $R$ candidates outside the groups having a total score more than $F^*$.
\end{lemma}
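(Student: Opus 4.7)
The plan is to bound $F_{r_i}$ uniformly by $F^*$ for every remaining candidate $c_{r_i}$ outside the $2p+3$ groups. Recall that the non-manipulators assign their remaining scores to $c_{r_1}, c_{r_2}, \ldots$ in decreasing order, while the manipulators assign theirs in increasing order. My first step is to establish that $F_{r_i}$ is weakly non-increasing in $i$, which reduces the task to bounding only $F_{r_1}$. When $c_{r_i}$ and $c_{r_{i+1}}$ sit within a single contiguous red-bar segment in each of the four votes, the step changes $v_{1,r_i}$ by $-(1+\epsilon)$, $v_{2,r_i}$ by $-1$, and $t_{1,r_i}, t_{2,r_i}$ each by $+1$, yielding a net change of $-\epsilon$. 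When a boundary in some voter's red-bar region is crossed, either the corresponding non-manipulator skips downward past a group block (strengthening the decrease) or a manipulator skips upward past a group block (possibly weakening it); I would argue via a pairing argument that each manipulator upward jump is compensated by a simultaneous or nearby non-manipulator downward jump of comparable magnitude, using the identity that all four red-bar regions arise by excising the same $(2p+3)m$ indices from otherwise consecutive sets.

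With monotonicity established, it remains to bound $F_{r_1}$. The highest remaining $N_1$ score is at most $(1+\epsilon)(D-2m)$ because groups 2 and 3 consume the top $2m$ positions; the highest remaining $N_2$ score is at most $D+m-1$ because $D+m$ is given to $c_1$; and the smallest remaining manipulator scores are each $O(m)$ by direct inspection of the group constructions (for instance, for $M_1$ the first four groups cover $\{m,\ldots,5m-1\}$ while group $5$ begins at $4m+\lfloor(1+\epsilon)m\rfloor$, exposing a red-bar slot near $5m$). Summing these four bounds and comparing with $F^* \geq (1+\epsilon)(D+1) + (D+m) + a_1$, the leading $(2+\epsilon)D$ terms cancel, leaving a linear-in-$m$ surplus in favor of $F^*$ that is strictly positive under $a_1 \geq 0$ and $m > 3/\epsilon$, independent of $p$.

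The main obstacle is the pairing argument in the monotonicity step. Because $N_2$ and $M_1$ alternate per-group increments of $m$ and $\lfloor(1+\epsilon)m\rfloor$, their red-bar gaps lie at slightly different positions, so in principle a single remaining candidate could hit an $N_2$ intra-block local maximum while simultaneously hitting an $M_1$ gap-crossing that produces an upward jump larger than $+1$. Showing that the $-\epsilon$ per-step reserve accumulated between consecutive jumps is always enough to absorb any such synchronized jumps, ideally via a block-level telescoping identity over the larger $\lfloor(2+\epsilon)m\rfloor$-sized blocks introduced in the construction, is the most delicate piece, and is where the hypothesis $m > 3/\epsilon$ will be used essentially.
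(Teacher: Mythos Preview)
Your approach has a concrete gap in Step~2 and misses the key structural observation that makes Step~1 tractable.

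\textbf{Step 2 fails numerically.} You assert that ``groups 2 and 3 consume the top $2m$ positions'' in $N_1$'s vote, but in the construction they occupy the \emph{bottom} $2m$ positions; groups $1,4,5,\ldots,2p+3$ fill $N_1$'s top $(2p+1)m$ slots consecutively, so the remaining $R$ scores of $N_1$ are exactly the consecutive block $\{(1+\epsilon)(2m+1),\ldots,(1+\epsilon)(2m+R)\}$. With your claimed bound $(1+\epsilon)(D-2m)$ for $N_1$'s top remaining score, together with $D+m-1$ for $N_2$ and roughly $5m$ and $m$ for the smallest remaining $M_1$ and $M_2$ scores, one gets
\[
F^*-F_{r_1}\ \ge\ (1+\epsilon)(2m+1)-6m+a_1+1\ =\ (2\epsilon-4)m+O(1),
\]
which is negative for every $\epsilon<2$. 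So the ``linear-in-$m$ surplus in favor of $F^*$ \ldots independent of $p$'' does not exist; a correct bound on $F_{r_1}$ must use the true value $(1+\epsilon)(2m+R)$, and positivity then relies on taking $p$ large.

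\textbf{Step 1 misses the structural shortcut.} The monotonicity argument you sketch is exactly where the difficulty lies, and the paper avoids it entirely. The construction is arranged so that the red-bar gaps of $N_2$ and of $M_1$ are perfectly complementary: each consists of blocks of size $\lfloor\epsilon m\rfloor$ spaced $2m$ apart, positioned so that the $i$-th highest remaining $N_2$ score plus the $i$-th lowest remaining $M_1$ score is the constant $D+6m-1$ for every $i$. This single identity replaces your entire pairing argument for those two voters. For $M_2$, whose gap structure does not pair with anything, the paper does not attempt to control jumps at all; it simply upper-bounds the actual $i$-th lowest remaining $M_2$ score by the virtual consecutive value $D-2m-R+i-1$. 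Combined with the consecutive $N_1$ block, this yields a closed-form upper bound on every $F_{r_i}$ that is maximized at $i=1$ and shown to be at most $F^*$ for large $p$. Your proposal never identifies the $N_2$--$M_1$ complementarity, and without it the ``block-level telescoping'' you hope for has no anchor: $M_2$'s upward jumps are not matched by any non-manipulator jump (since $N_1$'s remaining block is jump-free), so actual monotonicity of $F_{r_i}$ is not guaranteed and your plan does not close.
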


\begin{proof}
We calculate the highest possible score of these $R$ candidates. Suppose that $N_2$ and $M_1$ give candidate $c_{r_i}$ score $x_i$ and $y_i$ respectively. Notice that all the remaining scores in $N_2$ and $M_1$ are gathered in groups, each groups has exactly $\lfloor\epsilon m\rfloor$ candidates and the distance between two adjacent groups is exactly $2m$. Since $x_i$ is the $i^{th}$ highest score left for $N_2$ and $y_i$ is the $i^{th}$ lowest score left for $M_1$, $x_i+y_i$ is a constant, which is $D+m-1+5m=D+6m-1$.

Notice that the remaining $R$ scores from $N_1$ are consecutive from $2m + 1$ to $2m + R$. Since $M_2$'s remaining scores are not consecutive, we consider a set of virtual scores such that we assume $M_2$'s remaining $R$ scores are from $D-2m-R$ to $D-2m-1$. By sorting these $R$ scores from low to high and giving $c_{r_i}$ the $i^{th}$ lowest scores, the total score of $c_{r_i}$ will not be larger than the total score in the actual vote. In the virtual scores, the scores from $N_1$ and $M_2$ form arithmetic sequences, and therefore, the highest-score candidate is the candidate who gets the highest score in $N_1$'s votes. Her total score is $(1+\epsilon)(2m+R)+(D+6m-1)+(D-2m-R)$. Recall that $R=D-(2p+3)m=\lfloor(2+\epsilon)m\rfloor\cdot p-2pm$. Therefore, we have
\begin{align*}
&~(1+\epsilon)(2m+R)+(D+6m-1)+(D-2m-R) \\
=&~ (1+\epsilon)(D - (2p+1)m) + (D + 6m - 1) + (2p + 1) m \\
\leq&~(1+\epsilon)(D+1)+a_1+D+m \\
\leq&~F^*.
\end{align*}
where the second-to-the-last inequality holds with a sufficiently large $p$.
\end{proof}

Combining Lemma~\ref{nlem0} and~\ref{nlem1}, we finish the proof of Lemma~\ref{nlem3}. Finally, combining Lemma~\ref{nlem2} and~\ref{nlem3}, we finish the proof of Theorem~\ref{thm8}. 

\section{Conclusion}
We study the problem of a Borda manipulation with 2 manipulators and 2 non-manipulators (2-2BM) with one of them weighted by $w$. We prove that when $w\le 1$, the problem is in P and when $w > 1$ and $w\in \mathbb{Q}$, the problem is NP-complete. This conclusion could generate the hardness result of \cite{Betzler2011Unweighted} when the number of non-manipulators is larger than 2 by splitting an integer-weight non-manipulator into several unweighted non-manipulators.

But the problem of Borda manipulation when there are more than 2 manipulators still remains open. This may require the hardness result of $d$-NMTS problem, which is still a demanding job. Furthermore we conjecture that the computational hard boundary lies in the total weight relation between manipulators and non-manipulators. Our work could be a good starting point.   

\newpage
\bibliographystyle{unsrtnat}  % do not change this line!
\bibliography{ref.bib} 

%% The file named.bst is a bibliography style file for BibTeX 0.99c

\end{document}